\newcommand{\Rmnum}[1]{\expandafter\@slowromancap\romannumeral #1@}
\renewcommand{\algorithmicrequire}{ \textbf{Input:}}
\renewcommand{\algorithmicensure}{ \textbf{Output:}}
\theoremstyle{plain}
\newtheorem{proposition}{Proposition}
\begin{document}
\include{com.tex}

\title{Complexity-Scalable Near-Optimal Transceiver Design for Massive MIMO-BICM Systems}
\author{Jie Yang, Wanchen Hu, Yi Jiang,~\IEEEmembership{Member,~IEEE}, Shuangyang Li,~\IEEEmembership{Member,~IEEE}, \\Xin Wang,~\IEEEmembership{Fellow,~IEEE}, Derrick Wing Kwan Ng,~\IEEEmembership{Fellow,~IEEE}, and Giuseppe Caire,~\IEEEmembership{Fellow,~IEEE}
}
\maketitle
 \blfootnote{
J. Yang, W. Hu, Y. Jiang, and X. Wang are with Key Laboratory for Information Science of Electromagnetic Waves (MoE), School of Information Science and Technology, Fudan University, Shanghai, China (E-mails: \{yangjie23, huwc23\}@m.fudan.edu.cn, \{yijiang,  xwang11\}@fudan.edu.cn.) \\
\indent D. W. K. Ng is with the School of Electrical Engineering and Telecommunications, University of New South
Wales, Sydney, NSW 2052, Australia (email: w.k.ng@unsw.edu.au),\\
\indent S. Li and G. Caire
are with Communications and Information Theory Group (CommIT), Technische Universitat Berlin, 10587 Berlin, Germany (e-mails:\{
shuangyang.li, caire\}@tu-berlin.de.)
}
\begin{abstract}
Future wireless networks are envisioned to employ multiple-input multiple-output (MIMO) transmissions with large array sizes, and therefore, the adoption of complexity-scalable transceiver becomes important. In this paper, we propose a novel complexity-scalable transceiver design for MIMO systems exploiting bit-interleaved coded modulation (termed MIMO-BICM systems). The proposed scheme leverages the channel bidiagonalization decomposition (CBD), based on which an optimization framework for the precoder and post-processor is developed for maximizing the mutual information (MI) with finite-alphabet inputs. Particularly, we unveil that the desired precoder and post-processor behave distinctively with respect to the operating signal-to-noise ratio (SNR), where the equivalent channel condition number (ECCN) serves as an effective indicator for the overall achievable rate performance. Specifically, at low SNRs, diagonal transmission with a large ECCN is advantageous, while at high SNRs, uniform subchannel gains with a small ECCN are preferred. This allows us to further propose a low-complexity generalized parallel CBD design (GP-CBD) based on Givens rotation according to a well-approximated closed-form performance metric on the achievable rates that takes into account the insights from the ECCN.
Numerical results validate the superior performance of the proposed scheme in terms of achievable rate and bit error rate (BER), compared to state-of-the-art designs across various modulation and coding schemes (MCSs).

\end{abstract}
\begin{IEEEkeywords}
MIMO, coded modulation, channel decomposition, mutual information, ill-conditioned channel.
\end{IEEEkeywords}

\section{Introduction}
Multiple-input multiple-output bit-interleaved coded modulation (MIMO-BICM) is a cornerstone of modern wireless communications and has been widely adopted in cellular networks and wireless local area networks (WLANs) \cite{BICM,wang2020joint,bjornson2017massive}. Furthermore, massive MIMO with BICM has emerged as a key component in the fifth-generation (5G) and beyond communication systems. As the number of antennas increases, receiver complexity escalates dramatically. In massive MIMO systems, maximum likelihood (ML) and maximum a posteriori (MAP) detection require an exhaustive search over all possible transmitted symbols to minimize detection and decoding errors. However, with large antenna arrays and high-order modulation, the computational complexity of these optimal algorithms becomes prohibitive, rendering them impractical for real-time transmission. Conversely, linear detection techniques, such as zero-forcing (ZF) and linear minimum mean square error (MMSE), along with nonlinear decision feedback equalization (DFE), are widely considered due to their lower computational cost, albeit at the expense of degraded error performance. Over the past few decades, extensive research has been devoted to achieving near-optimal performance in MIMO systems given the channel state information (CSI). These efforts are broadly categorized into two approaches: (i) mutual information (MI)-based designs and (ii) bit error rate (BER)-based designs. Gradient descent-based methods \cite{9343768, jing2021linear, 9693344, 10534211} have been widely explored for precoder optimization to maximize MI in MIMO systems. On the BER front, various suboptimal nonlinear detection algorithms have been proposed, including sphere decoding (SD) \cite{SD, barbero2008fixing}, Bayesian inference-based techniques \cite{takahashi2021low, 10480363}, and neural network-assisted methods \cite{zheng2025low, OLTD, PMAP}. While these approaches improve performance, they generally incur significantly higher complexity than MMSE-based techniques, rendering them impractical for real-time applications.


With CSI available at both the transmitter and receiver, holistic transceiver optimization can fully exploit the transmission capacity and reliability of MIMO systems while maintaining manageable complexity. It is well established that singular value decomposition (SVD)-based designs when paired with MMSE reception, achieve the channel capacity with Gaussian inputs in MIMO systems \cite{telatar1999capacity, perovic2021achievable}. However, the performance of SVD-based schemes is constrained by the weakest subchannel in the high signal-to-noise ratio (SNR) regime, especially for discrete modulation inputs with finite constellation sizes \cite{GMD}. This limitation becomes more pronounced in massive MIMO systems, where the channel becomes significantly ill-conditioned, further exacerbating performance degradation. To mitigate the impact of the worst subchannel, geometric mean decomposition (GMD)-based methods have been proposed \cite{GMD, UCD}. Unfortunately, DFE in GMD-based schemes usually leads to a significant error propagation. Besides, Mohammed \textit{et al.} \cite{mohammed2011mimo} proposed an X-code structure to approach ML-decoding performance. However, this efficient X-code suffers from the performance degradation under ill-conditioned channels \cite{mohammed2011precoding}. More recently, a channel bidiagonal (CBD) structure-based on matrix decomposition for MIMO systems was introduced in \cite{CBDbib}. This structure supports ML reception with the same complexity as MMSE for MIMO-BICM systems. Notably, the authors in \cite{maleki2024precoding} demonstrated that the precoding strategies in MIMO-BICM systems significantly differ from those in traditional MIMO systems, and emphasized the necessity of transceiver optimization for MIMO-BICM systems. The authors further validated this claim through experimental results. This distinction arises because transmission strategies in MIMO-BICM systems must consider both channel conditions and the discrete bit-level modulation, whereas traditional MIMO systems focus primarily on maximizing power or energy efficiency.

Despite the importance of transceiver optimization in MIMO-BICM systems, research in this area is limited. Notable studies, such as \cite{maleki2024precoding} proposed a gradient descent method to maximize the MI for each channel realization in MIMO-BICM systems at various SNRs. This approach suffers from prohibitively high computational complexity. To address this issue, the authors also explored a low-complexity method by Monte Carlo trials to establish a connection between transceiver design and channel coding rate, at the expense of performance degradation. In addition, a block bidiagonal scheme was also proposed in~\cite{PL-CBD}, which leverages a hyperparameter to adjust the size of the CBD, without highlighting the theoretical analysis of the achievable rate performance. Indeed, the studies on the theoretical analysis for MIMO-BICM are relatively scarce in the current literature. The most relevant theoretical attempt to evaluate the performance of MIMO-BICM systems appeared in \cite{fertl2011performance}, where the authors provided extensive experimental evidence suggesting that no universal transceiver design can produce superior MI and BER performance from low to high SNR regimes. 

Due to the lack of theoretical analysis in MIMO-BICM systems, transceiver designs have relied primarily on heuristic optimization methods to improve MI and BER performance. However, these approaches often result in high complexity or performance degradation under ill-conditioned channels \cite{CBDbib,maleki2024precoding,PL-CBD,fertl2011performance}. Motivated by these limitations, {this paper proposes a comprehensive theoretical framework for non-iterative\footnote{A comprehensive performance assessment of soft-in soft-out demodulators in iterative BICM receivers entails a fundamentally different analytical framework and is therefore beyond the scope of this paper.}} MIMO-BICM systems. This framework, in turn, enables novel and efficient optimization strategies for general MIMO-BICM systems given the CSI.

{In particular, we adopt an information-theoretic approach to evaluate the MI of the MIMO-BICM transceiver design with finite-alphabet inputs. We restrict ourselves in considering the case where the instantaneous CSI is available perfectly at the transmitter side, such that the achievable rate and BER averaged across sufficient number of channel realizations (in the ergodic sense) becomes relevant performance metric.} A concise expression is derived for the achievable rate of the MIMO-BICM system as a function of the \textit{a posteriori} log-likelihood ratio (LLR), based on which, further approximations on the achievable rate are also presented in order to facilitate the efficient precoding design per channel realization.
The main contributions of this paper are summarized as follows:
\begin{itemize}

\item We leverage MI analysis to derive a concise expression for the achievable rate of MIMO-BICM systems with finite-alphabet inputs. To enhance analytical tractability, we establish a lower bound for the achievable rate function in both low and high SNR regimes and demonstrate its asymptotic optimality. Building on these theoretical insights, we identify the opposite characteristics of the transceiver design at low and high SNRs, which can be characterized by the equivalent channel condition
number (ECCN). Specifically, we show that, in terms of maximizing the achievable rate, a channel matrix with a larger ECCN is advantageous in the low SNR regime, while a channel matrix with a smaller ECCN is preferred in the high SNR regime.


\item Guided by the derived MI analysis, we propose a novel  generalized parallel CBD (GP-CBD) design. This design strategically leverages eigen-subspaces permutation and rotations, along with partial block bidiagonalization for precoder and post-processor optimization. Due to this block structure, the proposed GP-CBD scheme enables an efficient implementation with a scalable complexity to the number of antennas, which is particularly suitable for massive MIMO applications. More importantly, the GP-CBD scheme is highly adaptive to the preferred ECCNs with respect to different SNRs. This adaptability allows the GP-CBD scheme to achieve promising performance in terms of achievable rate and BER in varying SNR regimes. 

\item Extensive numerical results are provided based on the independent and identical distributed (i.i.d.) Rayleigh fading channel and the clustered delay line (CDL) channel model specified
in 3GPP TR 38.901~\cite{38901}. Particularly, our numerical results demonstrate near-optimal performance of the proposed scheme with an overall complexity linear to the size of the constellation and number of transmit antennas, while quadratic to the number of receive antennas.
\end{itemize}

\begin{figure*}[ht!]
\centering
\includegraphics[width=6.7in]{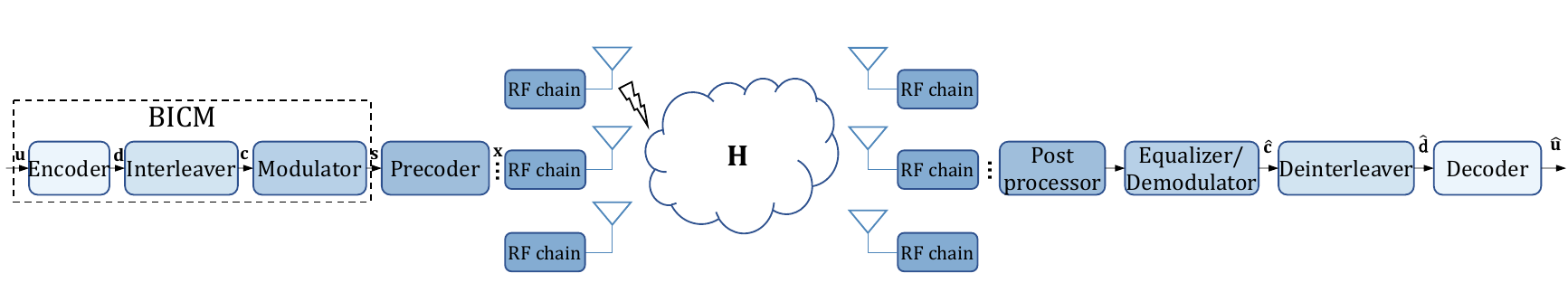}
\caption{The considered MIMO-BICM system.}
\label{BICM}
\end{figure*}

The remainder of this paper is organized as follows. Section \ref{System-Model} describes the MIMO-BICM system model and briefly reviews the conventional transceiver design principles. In Section \ref{Performance and Comp}, we derive a succinct expression of the achievable rate for MIMO-BICM systems and provide a guideline for the development of the desired CBD based on MI analysis. In Section \ref{SEC:GP-CBD}, we propose a new transceiver design called GP-CBD to achieve high performance with low complexity for massive MIMO-BICM systems. Numerical results are provided in Section \ref{SEC:results}. Section \ref{SEC:conclusion} concludes this paper.

\emph{Notations:} Boldface uppercase (lowercase) letters denote matrices, e.g., $\mathbf{A}$, (column vectors, e.g., $\mathbf{a}$), and italics denote scalars, e.g., $a$. The $(i, j)$-th entry of $\mathbf{A}$ is denoted by $A_{i,j}$. The columns of an $M\times N$ matrix $\mathbf{A}$ are denoted by $\mathbf{a}_1,\mathbf{a}_2,\cdots,\mathbf{a}_{N}$; $\mathbb {C}^{M\times N}$ and $\mathbb {R}^{M\times N}$ denote the sets of $M\times N$ complex and real matrices, respectively. The superscripts $(\cdot)^T$ and $(\cdot)^H$ stand for the transport and conjugate transport operators, respectively; $\rm Tr(\mathbf{A})$ represents the trace of matrix $\mathbf{A}$; $\mathbf{I}_N$ represents an $N \times N$ identity matrix. The operator $\mathbb{E}[\cdot]$ denotes the statistical expectation; $\log(\cdot)$ stands for the base natural logarithm; $\| \cdot \|$ stands for the $2$-norm operation; $ \lfloor \cdot \rfloor$ represents the downward rounding operation; $p(\cdot)$ stands for the probability density function (PDF); and $\mathcal{H}(\cdot)$ stands for the entropy function; $\Re\{\cdot\}$ and $\Im\{\cdot\}$ represent the in-phase (or real) and quadrature (or imaginary) parts of data, respectively. $\rm blkdiag(\cdot)$ and $\rm diag(\cdot)$ stand for the block diagonal matrix and diagonal matrix, respectively. 

\section{System Model And Preliminaries} \label{System-Model}
This section begins with an introduction to the MIMO-BICM system model. We then present the conventional SVD-based transceiver and discuss its inherent challenges in massive MIMO systems. Additionally, a brief overview of the CBD scheme is provided.


\subsection{MIMO-BICM System Model}


We consider a point-to-point MIMO-BICM system with $N_t$ transmit antennas and $N_r$ receive antennas, as depicted in Fig. \ref{BICM}, where a sequence of information bits $u_i, i =1, 2, \cdots $ is transmitted. The bit sequence is first encoded by a forward error correction (FEC) code, resulting in a sequence of coded bits $d_k, k = 1, 2, \cdots$. These coded bits are then interleaved into bits $c_k, k = 1, 2, \cdots$. Next, the interleaved bits are grouped into subsequences of $K$ bits and mapped to constellation points by the modulator. Finally, the modulated symbols are multiplexed into the transmitted signal $\mathbf{s}\in {\mathbb C}^{N_s \times 1}$. The sequence contains $N_s$ information symbols, where each entry is modulated by $Q_m$ interleaved bits and taken from an $M$-QAM constellation set $\mathcal{M}$. Here, we assume $ N_s =\min\{N_t, N_r\}$ for the ease of exposition.  
Then, we have $K=Q_mN_s$ and $\mathbb E(\mathbf{s} \mathbf{s}^H)=\mathbf{I}_{N_s}$. 
For convenience, we define $\mathcal{M}^{c_k}$ as the set of \textit{symbols} given $c_k$, which contains symbols from $\mathcal{M}$ that correspond to $c_k$ based on the underlying mapping rule. Furthermore, $\mathcal{X}^{c_k}$ is defined as the set of symbol vectors given $c_k$, including all possible symbol vectors containing $c_k$ as the $k$-th bit.

Given the precoder $\mathbf{F} \in {\mathbb C}^{N_t\times N_s}$, the receiver obtains
\begin{equation}\label{eq1}
\Bar{\mathbf{y}} = \mathbf{HFs} + \Bar{\mathbf{z}},
\end{equation}
 where $\mathbf{H} \in \mathbb C^{N_r\times N_t}$ is the channel matrix,  $\Bar{\mathbf{z}}$ is the zero-mean circularly symmetric white Gaussian noise vector with $\mathbb E(\Bar{\mathbf{z}} \Bar{\mathbf{z}}^H)=\sigma^2_z\mathbf{I}_{N_r}$, and $\sigma^2_z$ is the noise power per receive antenna. 

Assuming that the CSI is available at the receiver, the sphere decoding (SD) algorithm \cite{SD} is proposed to address the following problem
\begin{equation}
\min_{\mathbf{s} \in {\cal X}} \| \Bar{\mathbf{y}} - \mathbf{HFs} \|^2, \label{eq2} 
\end{equation}
without exhaustively searching the entire solution space, where $\mathcal{X}$ denotes the set of the transmitted symbol vectors. After the post-processing and equalization stages, the soft information $\hat{c}_k,k=1,2,\cdots,K$, is obtained and subsequently fed into the decoder to obtain $\hat{u}$. Nevertheless, even for a MIMO system with moderate array size, e.g., $N_s=8$ and $M=64$, the SD algorithm becomes intractable for practical implementation, not to mention its soft version \cite{SoftSD,SOftSD1}, which prohibits its direct application in future wireless systems equipped with massive antennas. In fact, most practical MIMO systems consider certain reduced-complexity detection schemes together with well-designed precoding and post-processing modules for data detection \cite{GMD,PL-CBD}. Among these schemes, channel decomposition methods, such as the SVD-based transceiver design is the popular choice.

\subsection{Channel decomposition Preliminaries and the Challenge for SVD in Massive MIMO}
For the channel matrix $\mathbf{H}$, its generalized triangular decomposition is given by
\begin{equation}\label{eq:QRP}
    \mathbf{H} = \mathbf{QR P}^H,
\end{equation}
where $\mathbf{Q} \in \mathbb C^{N_r\times N_r} $ and $\mathbf{P}\in \mathbb C^{N_t\times N_s}$ are unitary matrices, and $\mathbf{R} \in \mathbb R^{N_r \times N_t}$ is an upper triangular matrix. 
The upper-triangular structure enables efficient implementation of DFE \cite{GMD,UCD}, which is commonly applied for data detection. In particular, the so-called ``equivalent channel condition number (ECCN)'' is a parameter that characterizes the error performance of DFE, which is defined as 
\begin{equation}\label{eq:ECCN}
    {\rm cond(\mathbf{R})} \triangleq \frac{\max R_{i,i}}{\min R_{i,i}}.
\end{equation}
It has been shown that the diagonal entries of $\mathbf{R}$ significantly influence the minimum Euclidean distance (MED) among possible received signals in the noiseless regime with the given channel matrix \cite{EqualQR}, directly impacting the system performance. In particular,  $\mathbf{R}$ can transform to a diagonal matrix. Then, (\ref{eq:QRP}) reduces to the SVD, given by
\begin{equation}\label{eq:svd}
    \mathbf{H} = \mathbf{U\Lambda V}^H,
\end{equation} 
where $\mathbf{U}\in \mathbb C^{N_r\times N_r} $ and $\mathbf{V}\in \mathbb C^{N_t\times N_s}$ are unitary matrices, obtained by proper unitary transformation from $\mathbf{Q}$ and $\mathbf{P}$, respectively; $\mathbf{\Lambda} = {\rm diag}\{\lambda_1,\cdots,\lambda_{N_s}\}$\footnote{Without loss the generality, we assume here that the singular values are in descending order, i.e., $\lambda_1 \geq \lambda_2 \geq \cdots \geq \lambda_{N_s}$.}. Moreover, (\ref{eq:QRP}) and (\ref{eq:svd}) should satisfy the majorization theory \cite{horn}
\begin{equation}\label{eq:horn1}
\begin{aligned}
        &\prod_{i=1}^{N_s} R_{i,i} = \prod_{i=1}^{N_s} \lambda_{i,i},
        &\prod_{i=1}^{n} R_{i,i} \leq \prod_{i=1}^{n} \lambda_{i,i}, 1\leq n\leq N_s.
\end{aligned}
\end{equation}
Consequently, SVD transforms the MIMO channel into $N_s$ eigen-subchannels. However, the presence of small $\lambda_n$ significantly degrades the output SNR, which can be characterized by the large ECCN ${\rm cond(\mathbf{\Lambda})} = \frac{\lambda_{1}}{\lambda_{N_s}}$. For MIMO systems operating in rich-scattered scenarios, the performance loss of SVD remains manageable due to the well-conditioned nature of $\mathbf{H}$. However, in massive MIMO systems with high SNRs, the performance of SVD-based transceivers is generally limited due to the significant path loss and potential line-of-sight dominance in the far-field transmission \cite{liu2024leveraging}. 
As we will show in Section \ref{SUBSEC:GP-CBD}, the performance of the SVD-based scheme deteriorates in high SNR regimes as ECCN increases, which is a common case in point-to-point MIMO transmissions with massive arrays \cite{9133524}.
As such, we are motivated to propose a complexity-scalable and high performance transceiver design for general MIMO-BICM systems based on the CBD scheme described below.

\subsection{CBD Transceivers}\label{SUB:BCJR}
The bidiagonalization of $\mathbf{H}$ can be obtained via a series of Householder transformations \cite{CBDbib} as
\begin{equation}\label{eq:CBD}
  \mathbf{H} = \mathbf{QBP}^H,
\end{equation}
where $\mathbf{Q} \in \mathbb C^{N_r\times N_r} $ and $\mathbf{P}\in \mathbb C^{N_t\times N_s}$ are unitary matrices, and $\mathbf{B} \in \mathbb R^{N_r \times N_t}$ is a \textit{real-valued} upper bidiagonal matrix. It is important to point out that $\mathbf{\Sigma}$ in (\ref{eq:svd}) is a special case of $\mathbf{B}$ with the subdiagonal entries being zeros. When CSI are available at both transmitter and receiver, the transmitter can employ $\mathbf{F} = \mathbf{P}$ as the precoder and the receiver adopts $\mathbf{Q}^H$ as the post-processor. Consequently, the original channel in (\ref{eq1}) is transformed into a bidiagonal matrix, i.e.,
\begin{equation} \label{eq:eqChanBD}
  \mathbf{y}= \mathbf{Q}^H\mathbf{HPs} + \mathbf{Q}^H\Bar{\mathbf{z}} = \mathbf{B} \mathbf{s} + \mathbf{z}, 
\end{equation}
where $\mathbf{z}=\mathbf{Q}^H\Bar{\mathbf{z}}$ remains a zero-mean white Gaussian noise \cite{tse2005fundamentals}. After the post-processing step, the receiver can apply conventional equalizers for symbol detection according to $\mathbf{B}$.

\begin{figure}[tb]
\centering
\includegraphics[width=2.5in]{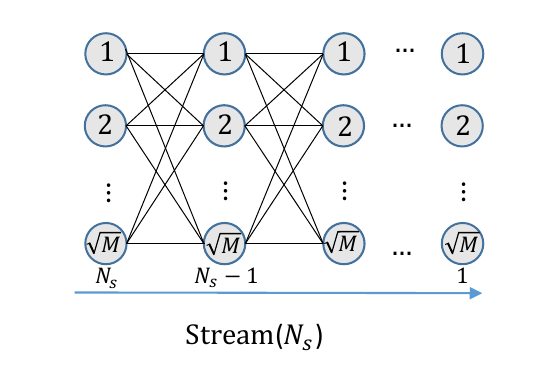}
\caption{The trellis model based on the real bidiagonal matrix: The in-phase and quadrature parts share the same trellis diagram.}
\label{FigT}
\end{figure}

An advantage property of $\mathbf{B}$ is that it is real-valued. Hence, the in-phase and the quadrature components of the received data i.e., $\mathbf{y}^{I} = \Re\{\mathbf{y}\}, \mathbf{y}^{Q} = \Im\{\mathbf{y}\}$, are decoupled. Specifically, the system model (\ref{eq:eqChanBD}) can be re-expressed as
\begin{equation}
    \begin{aligned}
\mathbf{y}^{I} & = \mathbf{Bs}^{I}+\mathbf{z}^{I}, \quad
\mathbf{y}^{Q} & = \mathbf{Bs}^{Q}+\mathbf{z}^{Q}, \\
\end{aligned}  
\end{equation}
where $\mathbf{s}^{I}$ and $\mathbf{s}^{Q}$ can be regarded as the $\sqrt{M}$-pulse amplitude modulation (PAM) signals{\footnote{Here, we consider that the underlying QAM constellation $\cal M$ is of a squared shape, e.g., $16$-QAM, such that the resultant PAM constellations have the same size $\sqrt M$.}}. Consequently, the equalization after post-processing can be represented by two identical fully-connected $\sqrt{M}$-trellis diagrams as shown in Fig. \ref{FigT}, where the $ \sqrt{M}$ (instead of $M$) states correspond to the $\sqrt{M}$-PAM symbols contained in either $\mathbf{y}^{I}$ or $\mathbf{y}^{Q}$, and the $N_s$ layers (depths) correspond to the number of information symbols. Denote ${\cal S}$ as the set of $\sqrt{M}$-PAM, i.e.,
\begin{equation}
  {\cal S} = \sqrt{\frac{3}{2(M-1)}}\left\{-\sqrt{M}+1,-\sqrt{M}+3,\cdots,\sqrt{M}-1\right\}.
\end{equation}
Note that $\mathbf{B}$ is not Toeplitz. Hence, the branch metrics associated with the state-transitions are different from layer to layer, which is in contrast to a conventional trellis diagram, e.g., the trellis of a typical convolutional code \cite{viterbi1971convolutional}. 
Based on the trellis representation in Fig. \ref{FigT}, the Viterbi or Bahl-Cocke-Jelinek-Raviv (BCJR) algorithm \cite{BCJR} can be employed for ML/MAP sequence estimation, even though $\mathbf{B}$ does not process a diagonal structure. The error performance of the considered CBD transceiver closely depends on the ECCN of $\mathbf{B}$, i.e., ${\rm cond(\mathbf{B})} =\frac{\max B_{i,i}}{\min B_{i,i}}$.  We will return to this aspect in Section \ref{SEC:III}. 

We highlight that the considered CBD transceiver exhibits a favorable performance-complexity tradeoff that is important for practical applications. Owing to the real-valued nature of $\mathbf{B}$, the CBD transceiver exhibits scalable complexity with respect to the number of antennas. Specifically, the required complexity for equalization after the post-processing is only ${\cal O}\left(2 M N_s\right)$, which scales linearly with the number of information symbols $N_s$ and constellation size $M$. In contrast, other decompositions generally require equalization complexity that grows exponentially with $N_s$ rendering them unsuitable for large-scale systems. In fact, the multiple-diagonal decomposition with width $D$ is fundamentally analogous to the SD \cite{SD}, whose primary goal is to reduce the search space dimensionality. The equalization for such decomposed channel matrices generally requires complexity around ${\cal O}\left(M^{D-1} N_s\right)$. Compared to the CBD transceivers, such decomposition requires much higher complexity that grows exponentially with $D$. 
More importantly, as we will demonstrate later, CBD transceivers can achieve near-optimal achievable rate performance after proper optimization. Therefore, CBD transceivers exhibit a good performance-complexity tradeoff that is important for the application of next generation wireless networks.



\section{Achievable Rate Analysis of CBD-based MIMO-BICM Systems} \label{Performance and Comp}
Before delving into the details of the achievable rate analysis of CBD-based schemes, we first provide the achievable rate analysis for general MIMO-BICM systems.


\subsection{Achievable Rate of MIMO-BICM} \label{sec:MIMO-BICM}
For a non-iterative MIMO-BICM system with an ideal interleaver, the interleaved bits can be assumed to be independent \cite{BICM,BICM-ID}, each with equal probability 
\begin{equation}\label{eq:eqprobe}
    p(c_k=1)=p(c_k=0)=\frac{1}{2}.
\end{equation}
In a practical MIMO-BICM system, the receiver calculates the soft information for each interleaved bit independently, before passing it to the deinterleaver and decoder as shown in Fig.~\ref{BICM}. The achievable rate of a MIMO-BICM system is obtained by averaging the MI for each bit level over the channel and summing the results as follows \cite{biglieri2000bit,mckay2005capacity}
\begin{equation} 
\begin{aligned} \label{eq:CAPBICM}
  \rm R_{MIMO-BICM}
   & = \sum^{K}_{k=1}I(c_k;\Bar{\mathbf{y}}|\mathbf{H}),
\end{aligned}
\end{equation}
where $I(c_k;\Bar{\mathbf{y}}|\mathbf{H})$ is the MI of the bit $c_k,k\in\{1,2,\cdots,K\}$ and the received data $\Bar{\mathbf{y}}$, given the channel $\mathbf{H}$. The MI $I(c_k;\Bar{\mathbf{y}}|\mathbf{H})$ is given by
\begin{equation} 
\begin{aligned} \label{eq:BICM}
  I(c_k;\Bar{\mathbf{y}}|\mathbf{H})
   & =  1 - \mathbb E_{\mathbf{s},\Bar{\mathbf{y}},\mathbf{H}} \left\{ \log_2 \frac{\sum_{\mathbf{s}\in \mathcal{X}}p(\Bar{\mathbf{y}}|\mathbf{s},\mathbf{H})}{\sum_{\mathbf{s}\in \mathcal{X}^{k}}p(\Bar{\mathbf{y}}|\mathbf{s},\mathbf{H})}  \right\}.\\
\end{aligned}
\end{equation}
From~\eqref{eq:CAPBICM} and (\ref{eq:BICM}), it becomes obvious that the direct assessment of the achievable rate for MIMO-BICM systems can be difficult because of the large number of combinations, especially in the case of massive MIMO.

To develop an efficient method for computing the achievable rate of MIMO-BICM systems, we assume that the coded bits remain independent after interleaving and de-interleaving \cite{BICM}. In the case that the channel matrix is known, the \textit{a posterior probability} (APP) can be computed as
\begin{equation}\label{eq:pfromLLR}
  p(c_k|\Bar{\mathbf{y}})=\frac{1}{1+e^{(1-2c_k)L_k}}.
\end{equation} 
It follows that
\begin{equation}
  \mathcal{H}(c_k|\Bar{\mathbf{y}}) = \mathbb E_{c_k, \Bar{\mathbf{y}}}\{\log_2(-p(c_k|\Bar{\mathbf{y}}))\}.
\end{equation}
In the above equation, $L_k$ is the LLR defined as
\begin{equation}\label{LLR}
    L_k = \log \frac{p(c_k=1|\Bar{\mathbf{y}},\mathbf{H})}{p(c_k=0|\Bar{\mathbf{y}},\mathbf{H})}.
\end{equation}
This allows us to re-express (\ref{eq:BICM}) as
\begin{equation}\label{eq:RBICM1}
    I(c_k;\Bar{\mathbf{y}}|\mathbf{H}) = 1-\mathcal{H}(c_k|\Bar{\mathbf{y}},\mathbf{H}),
\end{equation}
where $\mathcal{H}(c_k|\Bar{\mathbf{y}},\bf H)$ denotes the conditional entropy. Thus, the achievable rate is 
\begin{equation}\label{eq:RBICM}
\begin{aligned}
    \rm R_{ MIMO-BICM} &= \sum^{K}_{k=1}\left( 1-\mathbb E_{c_k, \mathbf{z}, \mathbf{H}} \left\{\log_2 \left(1+e^{(1-2c_k)L_k} \right) \right\} \right).
\end{aligned}
\end{equation} 
Here, for the equivalent channel $\bf B$, the calculation of $L_k, k=1,2,\cdots,K$ is based on the entries in $\mathbf{B}$ \cite{CBDbib}. Note that (\ref{eq:RBICM}) provides a general performance metric adopting the LLRs for the MIMO soft demodulators/equalizers, which is independent of the channel coding in use.


\subsection{Non-uniqueness of CBD}

An intriguing property of the CBD is that it is generally not unique for a given matrix. We demonstrate this important fact via the following lemma.

\begin{lemma}\label{theorem:cbd}
    For any given matrix $\mathbf{H}$ that can be bidiagonalized by $\mathbf{H}=\mathbf{QBP}^H$, there exists a unitary matrix $\mathbf{\Omega} \in \mathbb C^{N_t\times N_t}$ ($\mathbf{\Omega}\mathbf{\Omega}^H = \mathbf{I}_{N_t}$ and $\mathbf{\Omega}\neq\mathbf{I}_{N_t}$) making $\mathbf{H}\mathbf{\Omega}^H=\Tilde{\mathbf{Q}}\Tilde{\mathbf{B}}\Tilde{\mathbf{P}}^H$ and $\mathbf{H} = \tilde{\mathbf{H}} \mathbf{\Omega} = \tilde{\mathbf{Q}}\tilde{\mathbf{B}} (\mathbf{\Omega}^H \tilde{\mathbf{P}})^H$ with $\Tilde{\mathbf{B}} \neq \mathbf{B}$.
\end{lemma}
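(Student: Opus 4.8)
The plan is to reduce the lemma to the non-uniqueness of the bidiagonalization of $\mathbf{H}$ itself, and then to construct a second bidiagonalization explicitly. I first note that the auxiliary unitary $\mathbf{\Omega}$ carries no real content: suppose we have already produced a second CBD $\mathbf{H}=\tilde{\mathbf{Q}}\tilde{\mathbf{B}}\mathbf{W}^H$, with $\tilde{\mathbf{Q}}$ unitary, $\mathbf{W}\in\mathbb{C}^{N_t\times N_s}$ having orthonormal columns, $\tilde{\mathbf{B}}$ real upper bidiagonal, and $\tilde{\mathbf{B}}\neq\mathbf{B}$. Then for \emph{any} unitary $\mathbf{\Omega}\in\mathbb{C}^{N_t\times N_t}$ with $\mathbf{\Omega}\neq\mathbf{I}_{N_t}$, setting $\tilde{\mathbf{P}}:=\mathbf{\Omega}\mathbf{W}$ (which again has orthonormal columns) gives $\mathbf{H}\mathbf{\Omega}^H=\tilde{\mathbf{Q}}\tilde{\mathbf{B}}\mathbf{W}^H\mathbf{\Omega}^H=\tilde{\mathbf{Q}}\tilde{\mathbf{B}}\tilde{\mathbf{P}}^H$ and $\mathbf{\Omega}^H\tilde{\mathbf{P}}=\mathbf{W}$, whence $\mathbf{H}=\tilde{\mathbf{Q}}\tilde{\mathbf{B}}(\mathbf{\Omega}^H\tilde{\mathbf{P}})^H$, exactly as claimed. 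So it suffices to construct one alternative CBD of $\mathbf{H}$ whose bidiagonal factor differs from $\mathbf{B}$.

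The key step is to pass to the Hermitian matrix $\mathbf{C}:=\mathbf{H}^H\mathbf{H}$. Any CBD $\mathbf{H}=\mathbf{Q}\mathbf{B}\mathbf{P}^H$ yields $\mathbf{B}^T\mathbf{B}=\mathbf{P}^H\mathbf{C}\mathbf{P}$, a real symmetric tridiagonal matrix; by the three-term Lanczos recurrence it is determined entirely by $\mathbf{C}$ and the seed $\mathbf{p}_1$ (the first column of $\mathbf{P}$), its entries being fixed by the moments $\mathbf{p}_1^H\mathbf{C}^k\mathbf{p}_1=\sum_j|\mathbf{v}_j^H\mathbf{p}_1|^2\lambda_j^{2k}$, where $\lambda_j^2$ and $\mathbf{v}_j$ are the eigenvalues and eigenvectors of $\mathbf{C}$ (equivalently the squared singular values and right singular vectors of $\mathbf{H}$). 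Conversely, a positive definite tridiagonal matrix factors as $\mathbf{R}^T\mathbf{R}$ with $\mathbf{R}$ real upper bidiagonal (the Cholesky factor of a tridiagonal matrix is bidiagonal), so from a Lanczos basis $\mathbf{W}$ for $\mathbf{C}$ one recovers a real upper bidiagonal $\tilde{\mathbf{B}}$ with $\tilde{\mathbf{B}}^T\tilde{\mathbf{B}}=\mathbf{W}^H\mathbf{C}\mathbf{W}$, and then $\tilde{\mathbf{Q}}$ from $\mathbf{H}$ and $\mathbf{W}$; note that $\tilde{\mathbf{B}}\neq\mathbf{B}$ follows at once from $\mathbf{W}^H\mathbf{C}\mathbf{W}\neq\mathbf{P}^H\mathbf{C}\mathbf{P}$. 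Hence, building a different bidiagonal factor reduces to choosing a new unit seed $\mathbf{p}_1'\in\mathrm{range}(\mathbf{P})=\mathrm{range}(\mathbf{H}^H)$ with $|\mathbf{v}_j^H\mathbf{p}_1'|\neq|\mathbf{v}_j^H\mathbf{p}_1|$ for some $j$, which (for distinct $\lambda_j^2$) changes the moments and therefore $\mathbf{W}^H\mathbf{C}\mathbf{W}$.

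In the generic ``unreduced'' case, where $\mathbf{B}$ has full rank and no zero superdiagonal entry --- equivalently $\mathbf{C}$ has $N_s$ distinct positive eigenvalues and $\mathbf{v}_j^H\mathbf{p}_1\neq 0$ for every $j$ --- I would take $\mathbf{p}_1'=(\mathbf{p}_1+\epsilon\,\mathbf{p}_2)/\sqrt{1+\epsilon^2}$ for a small real $\epsilon\neq 0$, with $\mathbf{p}_2$ the second column of $\mathbf{P}$. Since $\mathbf{p}_1^H\mathbf{p}_2=0$, the derivative of $|\mathbf{v}_j^H\mathbf{p}_1'|^2$ with respect to $\epsilon$ at $\epsilon=0$ equals $2\,\Re(\mathbf{v}_j^H\mathbf{p}_1\,\overline{\mathbf{v}_j^H\mathbf{p}_2})$, which cannot vanish for all $j$ (otherwise $\mathbf{v}_j^H\mathbf{p}_2=0$ for all $j$, forcing $\mathbf{p}_2=\mathbf{0}$). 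Thus, for all sufficiently small $\epsilon\neq 0$ the moment sequences of $\mathbf{p}_1$ and $\mathbf{p}_1'$ differ, so the Lanczos basis $\mathbf{W}$ generated from $(\mathbf{C},\mathbf{p}_1')$ gives $\tilde{\mathbf{B}}^T\tilde{\mathbf{B}}=\mathbf{W}^H\mathbf{C}\mathbf{W}\neq\mathbf{B}^T\mathbf{B}$, hence $\tilde{\mathbf{B}}\neq\mathbf{B}$; recovering $\tilde{\mathbf{Q}}=\mathbf{H}\mathbf{W}\tilde{\mathbf{B}}^{-1}$ (extended to a unitary matrix when $N_r>N_s$) and invoking the first paragraph with any $\mathbf{\Omega}\neq\mathbf{I}_{N_t}$ completes this case.

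I expect the degenerate cases to be the main obstacle. When $\mathbf{B}$ has a zero superdiagonal entry (in particular the SVD case $\mathbf{B}=\mathbf{\Lambda}$), $\mathbf{C}$ block-decomposes and the construction above must instead be applied inside a block on which the relevant squared singular values are not all equal --- for $\mathbf{B}=\mathbf{\Lambda}$ this creates a genuine nonzero superdiagonal, so again $\tilde{\mathbf{B}}\neq\mathbf{B}$. In the fully degenerate situation where $\mathbf{H}$ restricts to a scaled isometry, so that every CBD shares the same $|\mathbf{B}|$, the required $\tilde{\mathbf{B}}\neq\mathbf{B}$ is obtained instead by flipping the sign of a single diagonal entry of $\mathbf{B}$, which the sign freedom in the bidiagonalization permits (this needs only $\mathbf{H}\neq\mathbf{0}$). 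Checking that these corner cases, together with the rank-deficient geometry for $N_t>N_r$ in which $\mathbf{P}$ is an isometry rather than a square unitary, are all handled consistently is the most delicate part of turning this sketch into a complete proof.
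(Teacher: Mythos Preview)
Your argument is sound but takes a considerably more elaborate route than the paper. The paper's proof is essentially one line: for a suitably chosen unitary $\mathbf{\Omega}$, the first column of $\tilde{\mathbf{H}}=\mathbf{H}\mathbf{\Omega}^H$ has a different Euclidean norm than the first column of $\mathbf{H}$; since the first Householder step of the bidiagonalization sets $\tilde B_{1,1}=\|\tilde{\mathbf{h}}_1\|$, this immediately yields $\tilde B_{1,1}\neq B_{1,1}$, hence $\tilde{\mathbf{B}}\neq\mathbf{B}$. No Lanczos recurrences, moment sequences, or Cholesky factors appear. Your route, by contrast, first observes (correctly) that $\mathbf{\Omega}$ is a red herring, and then parametrizes the freedom in the CBD via the Lanczos seed $\mathbf{p}_1$ and the induced tridiagonal $\mathbf{B}^T\mathbf{B}$. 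This gives a much more structural picture and treats the degenerate cases (repeated singular values, scaled isometries, rank deficiency) explicitly, whereas the paper hides all of them behind the word ``generally''. The price is length and machinery.

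One step worth tightening: your claim that $2\,\Re\bigl(\mathbf{v}_j^H\mathbf{p}_1\,\overline{\mathbf{v}_j^H\mathbf{p}_2}\bigr)=0$ for all $j$ forces $\mathbf{v}_j^H\mathbf{p}_2=0$ is not valid for arbitrary orthonormal $\mathbf{p}_1,\mathbf{p}_2$ over $\mathbb{C}$, since the products could all be purely imaginary. It \emph{is} valid here because your $\mathbf{p}_2$ is the second Lanczos vector, so $\mathbf{v}_j^H\mathbf{p}_2=c(\lambda_j^2-\mu)\,\mathbf{v}_j^H\mathbf{p}_1$ with $c,\mu$ real, making each product $c(\lambda_j^2-\mu)|\mathbf{v}_j^H\mathbf{p}_1|^2$ automatically real; but you should say so. Alternatively, bypass the derivative argument entirely and note that the first moment $\mathbf{p}_1^H\mathbf{C}\mathbf{p}_1=B_{1,1}^2$ already varies over unit seeds in $\mathrm{range}(\mathbf{H}^H)$ unless all nonzero singular values of $\mathbf{H}$ coincide --- which is exactly the paper's first-column-norm observation, recast in your language.
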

\begin{proof}
For a channel matrix $\mathbf{H}$, right-multiplying by such a non-identity unitary matrix $\mathbf{\Omega}^H$ obtains $\tilde{\mathbf{H}}$, which generally alters the entries and the norm of the first column in $\mathbf{H}$, i.e., $\mathbf{h}_1 \neq \Tilde{\mathbf{h}}_1$ and $\|\mathbf{h}_1\|^2 \neq \|\Tilde{\mathbf{h}}_1\|^2$. Therefore, the first step of Householder-based bidiagonalization of $\mathbf{H}$ and $\Tilde{\mathbf{H}}$ generates the distinct entries $B_{1,1}$ and $\Tilde{B}_{1,1}$ ($B_{1,1}\neq \Tilde{B}_{1,1}$) in $\mathbf{B}$ and $\Tilde{\mathbf{B}}$, respectively. This concludes the proof.
\end{proof}


Since the bidiagonalization of a given channel matrix is not unique as outlined in Lemma \ref{theorem:cbd}, the ECCN and performance of the CBD transceivers may vary significantly according to the underlying structure of the decomposed matrices. Therefore, we are motivated to investigate the optimal CBD in the following subsection.



\subsection{Asymptotic Analysis of Achievable Rate}\label{SEC:III}
In this subsection, we seek to optimize the bidiagonal structure for an improved asymptotic achievable rate of MIMO-BICM systems. Adopting the \textit{log-max} approximation \cite{SD}, we rewrite (\ref{LLR}) as
\begin{equation}\label{eq:LLR}
\begin{aligned}
    L_k &= \log \frac{\sum_{\mathbf{s}\in\mathcal{X}^{1}}p(\bf{y}|{\bf s ,\bf H})}{\sum_{\mathbf{s}\in\mathcal{X}^{0}}p(\bf{y}|{\bf s ,\bf H})} \approx \log \frac{\mathop{\max}_{\mathbf{s}\in \mathcal{X}^{1}} p(\bf{y}|{\bf s ,\bf H})}{\mathop{\max}_{\mathbf{s}\in \mathcal{X}^{0}} p(\bf{y}|{\bf s ,\bf H})} \\
& = \frac{1}{\sigma^2_z}\left\{ \mathop{\min}_{\mathbf{s}\in \mathcal{X}^{0}}\|\mathbf{y-Hs}\|^2- \mathop{\min}_{\mathbf{s}\in \mathcal{X}^{1}}\|\mathbf{y-Hs}\|^2\right\}.
\end{aligned}
\end{equation}
Based on the above, we are able to derive a lower-bound of the instantaneous achievable rate given the channel $\bf H$ as shown below.

\begin{theorem}
    Given the channel $\mathbf{H}$, the achievable rate of the MIMO-BICM system in both low and high SNR regimes is lower-bounded by
    \begin{equation}\label{eq:the1}
    \begin{aligned}
        \rm R_{L} \triangleq &\sum_{i=1}^{N_s}\sum_{k=1}^{Q_m}\left(1-\log_2\left(1+e^{-d^2_{\rm min}\frac{\mathbf{h}_i^H\mathbf{h}_i}{\sigma^2_z}}\right)\right),
    \end{aligned}
    \end{equation}
where $d^2_{\rm min} = \frac{6}{M-1}$ is the MED of the constellation set $\mathcal{M}$ and $\mathbf{h}_i$ is the $i$-th column of $\mathbf{H}$. 
\end{theorem}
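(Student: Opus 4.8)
The plan is to prove the inequality term by term and then sum. Since $\sum_{k=1}^{K}I(c_k;\bar{\mathbf{y}}|\mathbf{H})$ is the achievable rate in~\eqref{eq:RBICM} and the $K=Q_mN_s$ coded bits split into $N_s$ groups of $Q_m$ bits, one group per transmitted symbol, it is enough to show that for every bit $k$ carried by symbol $i$ one has $I(c_k;\bar{\mathbf{y}}|\mathbf{H})\ge 1-\log_2(1+e^{-d^2_{\rm min}\mathbf{h}_i^H\mathbf{h}_i/\sigma^2_z})$, because the resulting bound is the same for all $Q_m$ bit positions of symbol $i$ and summing over positions and over $i$ reproduces the $\sum_i\sum_k$ form of $\mathrm{R}_L$. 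Two preliminary observations simplify a single term. First, using the symmetry of the square QAM constellation and its labeling, the circular symmetry of $\bar{\mathbf{z}}$, and the equiprobable bits in~\eqref{eq:eqprobe}, the expectation $\mathbb{E}_{c_k,\mathbf{z}}\{\log_2(1+e^{(1-2c_k)L_k})\}$ in~\eqref{eq:RBICM} collapses (with $\mathbf{H}$ fixed) to $\mathbb{E}_{\mathbf{z}}\{\log_2(1+e^{L_k})\mid c_k=0\}$, so I may condition on $c_k=0$ with a fixed transmitted vector $\mathbf{s}\in\mathcal{X}^{0}$. Second, replacing the exact LLR by the log-max LLR in this expression can only enlarge it (a mismatched bit posterior yields a cross-entropy no smaller than the conditional entropy), so a lower bound obtained with the log-max LLR~\eqref{eq:LLR} is automatically a lower bound on the true per-bit MI.

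With the log-max LLR, I would upper-bound the $\mathcal{X}^{0}$ minimization by evaluating it at the true symbol, $\min_{\mathbf{s}'\in\mathcal{X}^{0}}\|\bar{\mathbf{y}}-\mathbf{H}\mathbf{s}'\|^2\le\|\bar{\mathbf{z}}\|^2$, and I would handle the $\mathcal{X}^{1}$ minimization through the ``bit-flip'' competitor $\tilde{\mathbf{s}}$ that agrees with $\mathbf{s}$ in every entry except the $i$-th, where bit $k$ is toggled to its nearest alternative; then $\mathbf{H}(\mathbf{s}-\tilde{\mathbf{s}})=(s_i-\tilde{s}_i)\mathbf{h}_i$ with $|s_i-\tilde{s}_i|^2\ge d^2_{\rm min}$, so $L_k$ is controlled by the scalar pairwise quantity $-|s_i-\tilde{s}_i|^2\,\mathbf{h}_i^H\mathbf{h}_i/\sigma^2_z-2\Re\{(\overline{s_i-\tilde{s}_i})\mathbf{h}_i^H\mathbf{z}\}/\sigma^2_z$, whose mean carries precisely the exponent $d^2_{\rm min}\mathbf{h}_i^H\mathbf{h}_i/\sigma^2_z$. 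This turns the estimate into a one-dimensional Gaussian calculation.

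The two regimes are then treated separately. In the low-SNR regime ($\sigma^2_z\to\infty$) every LLR tends to zero, so I would Taylor-expand $\log_2(1+e^{u})=1+\tfrac{u}{2\ln 2}+\tfrac{u^2}{8\ln 2}+O(u^3)$, invoke the symmetry (consistency) property of LLRs in the form $\mathbb{E}\{L_k\mid c_k=0\}=-\tfrac12\mathbb{E}\{L_k^2\mid c_k=0\}+o(\cdot)$ to get $I(c_k;\bar{\mathbf{y}}|\mathbf{H})=\tfrac{1}{8\ln 2}\mathbb{E}\{L_k^2\mid c_k=0\}+o(\cdot)$, and compare it against the leading term $\tfrac{d^2_{\rm min}\mathbf{h}_i^H\mathbf{h}_i}{2\sigma^2_z\ln 2}$ of $1-\log_2(1+e^{-d^2_{\rm min}\mathbf{h}_i^H\mathbf{h}_i/\sigma^2_z})$. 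In the high-SNR regime ($\sigma^2_z\to 0$) the true symbol vector is, with overwhelming probability, the unique minimizer of $\|\bar{\mathbf{y}}-\mathbf{H}\mathbf{s}'\|^2$ over $\mathcal{X}$, so $L_k$ is driven by the single closest $\mathcal{X}^{1}$ competitor; I would then bound $\mathbb{E}_{\mathbf{z}}\{\log_2(1+e^{L_k})\}$ by splitting on $\{L_k\le 0\}$ and $\{L_k>0\}$, using $\log_2(1+e^{L_k})\le e^{L_k}/\ln 2$ on the former and $\log_2(1+e^{L_k})\le(L_k+e^{-L_k})/\ln 2$ on the latter, and evaluating the resulting Gaussian moment and tail integrals, which decay exponentially with rate governed by $d^2_{\rm min}\mathbf{h}_i^H\mathbf{h}_i/\sigma^2_z$.

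The crux is justifying the reduction to a single competitor: collapsing the two minimizations over the exponentially large sets $\mathcal{X}^{0},\mathcal{X}^{1}$ to the pair $(\mathbf{s},\tilde{\mathbf{s}})$ is legitimate only because a single competitor dominates $L_k$ in each regime---at low SNR the LLR is, to first order, linear in $\mathbf{z}$ so the remaining competitors contribute only at higher order, while at high SNR every other competitor is suppressed by an extra exponential factor. Pinning down this dominant-competitor step, and verifying that the discarded contributions are of strictly lower order so that an inequality (and not merely an asymptotic equivalence) survives, is where the effort concentrates; the Taylor expansion and the Gaussian integrals are then routine bookkeeping.
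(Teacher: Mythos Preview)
Your plan is workable in spirit but takes a considerably more laborious route than the paper. The paper's single key step is Jensen's inequality applied to the concave map $y\mapsto\log_2(1+y)$:
\[
\mathbb{E}_{c_k,\mathbf{z}}\bigl[\log_2\bigl(1+e^{(1-2c_k)L_k}\bigr)\bigr]\ \le\ \log_2\bigl(1+\mathbb{E}_{c_k,\mathbf{z}}\bigl[e^{(1-2c_k)L_k}\bigr]\bigr),
\]
which immediately reduces each term to the computation of the single moment $\mathbb{E}\bigl[e^{(1-2c_k)L_k}\bigr]$. With the log-max LLR and the nearest-competitor reduction, that moment is a Gaussian moment-generating-function calculation and collapses to $e^{-\min\|\mathbf{He}\|^2/\sigma_z^2}\le e^{-d_{\min}^2\mathbf{h}_i^H\mathbf{h}_i/\sigma_z^2}$ in both regimes, with no Taylor expansion, no sign-of-$L_k$ case split, and no appeal to the LLR consistency relation. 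Your mismatched-metric (cross-entropy) lower bound is a correct starting point and would eventually reach the same expression, but it forfeits this shortcut.

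Two places in your plan need care if you pursue it. First, the consistency relation $\mathbb{E}\{L_k\mid c_k=0\}=-\tfrac12\mathbb{E}\{L_k^2\mid c_k=0\}+o(\cdot)$ is a property of \emph{exact} LLRs; once you commit to the log-max surrogate (as you do in your step~2), it is no longer available, so your low-SNR expansion must be carried out directly on the surrogate rather than borrowed from the true-LLR identity. Second, as you correctly flag, evaluating the $\mathcal{X}^1$ minimum at the bit-flip competitor $\tilde{\mathbf{s}}$ gives a \emph{lower} bound on $L_k$, which is the wrong direction for upper-bounding $\log_2(1+e^{L_k})$; the paper's proof has the same asymptotic ``dominant competitor'' step, but because of Jensen it only needs the approximation inside an exponential moment rather than inside the logarithm, which makes the residual terms easier to control.
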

\begin{proof}
    See Appendix \ref{App:A}.
\end{proof}

Next, we propose to improve the achievable rate by maximizing~\eqref{eq:the1}. 
Adopting the precoder $\mathbf{P}$ and the equalizer $\mathbf{Q}$ in (\ref{eq:CBD}), we can re-express (\ref{eq:the1}) as
\begin{equation}\label{eq:HP}
    {\rm R_{L}} = \sum_{i=1}^{N_s}\sum_{k=1}^{Q_m}\left(1-\log_2\left(1+e^{-d^2_{\rm min}\frac{\mathbf{b}_i^H\mathbf{b}_i}{\sigma^2_z}}\right)\right),
\end{equation}
where $\mathbf {b}_i$ being the $i$-th column of $\bf B$. Based on~\eqref{eq:HP}, we have the two following propositions.
\begin{proposition}\label{proposition:1}
    In low SNR regimes, the maximization of the achievable rate is equivalent to solving 
    \begin{equation}
    \begin{array}{cl}
    \underset{\mathbf{B}}{\operatorname{max}} & \sum^{N_s}_{i=1}\mathbf{b}_i^H\mathbf{b}_i \\ 
    \text {\rm s.t. }  & (\ref{eq:horn1}),{\rm Tr}(\mathbf{P}^H\mathbf{P})\leq N_t,
    \end{array} 
    \end{equation}  
    and the optimal CBD of $\bf H$ is given by the SVD solution with $\bf B$ being the diagonal matrix containing the singular values. In such a case, the channel matrix $\mathbf{H}$ with larger ECCN are beneficial for achieving an improved achievable rate under a fixed channel gain ${\rm Tr}\left({\bf H}^H {\bf H}\right)$.
\end{proposition}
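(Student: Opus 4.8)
The plan is to linearize $R_L$ in the low-SNR regime, recast the resulting objective as a Frobenius-norm maximization, and then solve that maximization with a Rayleigh--Ritz / majorization argument; the last step is the delicate one.

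\textbf{Step 1 (low-SNR linearization).} I would start from~(\ref{eq:HP}) and Taylor-expand the per-bit term about the origin. Since $\log_2(1+e^{-x}) = 1 - \frac{x}{2\ln 2} + O(x^2)$, we have $1-\log_2(1+e^{-x}) = \frac{x}{2\ln 2}+O(x^2)$. In the low-SNR regime $\sigma_z^2$ is large, so each argument $x_i \triangleq d^2_{\rm min}\,\mathbf{b}_i^H\mathbf{b}_i/\sigma_z^2$ tends to zero, whence $R_L = \frac{Q_m d^2_{\rm min}}{2\sigma_z^2\ln 2}\sum_{i=1}^{N_s}\mathbf{b}_i^H\mathbf{b}_i + o(\sigma_z^{-2})$. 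The leading coefficient being a positive constant, to leading order maximizing $R_L$ over the admissible decompositions is equivalent to maximizing $\sum_{i=1}^{N_s}\mathbf{b}_i^H\mathbf{b}_i$ subject to the structural constraints of a CBD, namely the multiplicative-majorization relations~(\ref{eq:horn1}) and the transmit power budget ${\rm Tr}(\mathbf{P}^H\mathbf{P})\le N_t$. This reproduces the optimization problem stated in the proposition.

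\textbf{Step 2 (Frobenius reformulation and the crux).} Next I would write $\sum_{i=1}^{N_s}\mathbf{b}_i^H\mathbf{b}_i=\|\mathbf{B}\|_F^2$, and since $\mathbf{B}$ arises from $\mathbf{H}\mathbf{P}$ by unitary left/right factors, $\|\mathbf{B}\|_F^2=\|\mathbf{H}\mathbf{P}\|_F^2={\rm Tr}(\mathbf{P}^H\mathbf{H}^H\mathbf{H}\mathbf{P})$. Using the SVD $\mathbf{H}=\mathbf{U}\mathbf{\Lambda}\mathbf{V}^H$ and setting $\mathbf{G}\triangleq\mathbf{V}^H\mathbf{P}$, this equals $\sum_j\lambda_j^2\|\mathbf{g}^{(j)}\|^2$ with $\sum_j\|\mathbf{g}^{(j)}\|^2=\|\mathbf{P}\|_F^2\le N_t$, where $\mathbf{g}^{(j)}$ is the $j$-th row of $\mathbf{G}$. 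The main obstacle is to pin down the maximizer, and this is exactly where~(\ref{eq:horn1}) is essential: its product-equality $\prod_{i=1}^{N_s}B_{i,i}=\prod_{i=1}^{N_s}\lambda_i$ forbids a rank-deficient $\mathbf{B}$ and, combined with AM--GM applied to the singular values of $\mathbf{P}$ under the power budget, forces $\mathbf{P}$ to have (essentially) orthonormal columns spanning the dominant $N_s$-dimensional right-singular subspace of $\mathbf{H}$, i.e. $\|\mathbf{g}^{(j)}\|^2=1$ for $j\le N_s$ and $0$ otherwise. Under this constraint the objective equals $\sum_{i=1}^{N_s}\lambda_i^2$, the Ky Fan maximum, attained at $\mathbf{P}=\mathbf{V}$, which yields $\mathbf{H}\mathbf{P}=\mathbf{U}\mathbf{\Lambda}$ and hence $\mathbf{B}=\mathbf{\Lambda}$. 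A tidy way to package the same fact: $\|\mathbf{B}\|_F^2=\sum_i B_{i,i}^2+\sum_i B_{i,i+1}^2$, and the log-majorization in~(\ref{eq:horn1}) together with Schur-convexity (Karamata's inequality) gives $\sum_i B_{i,i}^2\le\sum_i\lambda_i^2$ with equality iff the subdiagonal entries vanish, i.e. $\mathbf{B}$ is diagonal, forcing the SVD solution.

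\textbf{Step 3 (ECCN interpretation).} Finally, the optimal low-SNR rate is proportional to $\sum_{i=1}^{N_s}\lambda_i^2$, a partial sum of the ordered squared singular values, while the channel gain ${\rm Tr}(\mathbf{H}^H\mathbf{H})=\sum_i\lambda_i^2$ is held fixed. This partial sum is Schur-convex in the singular-value vector, so concentrating the fixed energy into the dominant modes --- precisely what raises the ECCN ${\rm cond}(\mathbf{\Lambda})=\lambda_1/\lambda_{N_s}$ --- cannot decrease, and generically increases, $\sum_{i=1}^{N_s}\lambda_i^2$; hence, at fixed channel gain, a more ill-conditioned $\mathbf{H}$ yields a higher low-SNR achievable rate. (When $N_s=N_t$ this partial sum coincides with the fixed gain, and the claim specializes to the companion fact that among all valid CBDs of a given $\mathbf{H}$ the diagonal/SVD one attains the largest ECCN, via the bounds $\max_i B_{i,i}\le\lambda_1$ and $\min_i B_{i,i}\ge\lambda_{N_s}$ implied by the triangular structure and~(\ref{eq:horn1}).)
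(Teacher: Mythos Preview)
Your overall approach matches the paper's proof closely: both linearize $R_L$ via Taylor expansion at low SNR, reduce to maximizing ${\rm Tr}(\mathbf{B}^H\mathbf{B}) = {\rm Tr}((\mathbf{HP})^H(\mathbf{HP}))$ under the stated constraints, and then identify the SVD as optimal. The paper simply asserts this last step, while you attempt a Rayleigh--Ritz justification; your Step~3 is also more careful than the paper's one-line ECCN remark.

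The ``tidy alternative'' at the end of Step~2, however, does not establish what you claim. You correctly bound $\sum_i B_{i,i}^2 \le \sum_i \lambda_i^2$ via log-majorization and Schur-convexity, with equality iff $\mathbf{B}$ is diagonal --- but this says nothing about maximizing $\|\mathbf{B}\|_F^2 = \sum_i B_{i,i}^2 + \sum_i B_{i,i+1}^2$, since a strictly bidiagonal $\mathbf{B}$ loses on the first sum yet gains on the second. The correct and much simpler observation is that in any valid CBD $\mathbf{H}=\mathbf{QBP}^H$ with (semi-)unitary $\mathbf{Q},\mathbf{P}$, the matrix $\mathbf{B}$ inherits the nonzero singular values of $\mathbf{H}$, so $\|\mathbf{B}\|_F^2 = \sum_i \lambda_i^2$ is \emph{identical for every CBD}; the SVD is optimal only in the trivial sense that it attains this common value. (The paper's proof glosses over this as well, stating the SVD conclusion without argument.) Relatedly, your first argument in Step~2 treats $\mathbf{P}$ as a free power-constrained matrix and then invokes~(\ref{eq:horn1}) plus AM--GM to recover orthonormal columns, but the semi-unitarity of $\mathbf{P}$ is already built into the CBD structure, so this detour is unnecessary.
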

\begin{proof}
Considering $\sigma_z^2\rightarrow \infty$ and applying the Taylor series expansion to~\eqref{eq:HP}, we obtain the following optimization problem for the CBD optimization in the low SNR regime.
\begin{equation}
\begin{array}{cl}
\underset{\mathbf{B}}{\operatorname{max}} & \sum^{N_s}_{i=1}\sum^{Q_m}_{k=1}\left(\frac{d_{\rm min}^2\mathbf{b}_i^H\mathbf{b}_i}{2\ln2 \sigma^2_z} - o(\frac{1}{\sigma^2_z})\right) \\ 
\text {\rm s.t. }  & (\ref{eq:horn1}),{\rm Tr}(\mathbf{P}^H\mathbf{P})\leq N_t,
\end{array} 
\end{equation}   
where $o(\frac{1}{\sigma^2_z})$ denotes the least significant terms on the order of $\frac{1}{\sigma^2_z}$. Note that $d^2_{\rm min} = \frac{6}{M-1}$ is a fixed constant for a given QAM constellation set $\cal M$. Thus, the above optimization is equivalent to maximizing ${\rm Tr}(\mathbf{HP}^H\mathbf{HP})$ with the power constraint. Recall that $\mathbf{H = QBP}^H$ and therefore we have $(\mathbf{HP})^H\mathbf{HP}=\mathbf{B}^H\mathbf{B}$. Consequently, the above problem reduces to maximizing ${\rm Tr}(\mathbf{B}^H\mathbf{B})=\sum^{N_s}_{i=1}\mathbf{b}_i^H\mathbf{b}_i$, where the optimal solution is given by the SVD, i.e., the off-diagonal entries in the bidiagonal matrix $\mathbf{B}$ are zeros. In this case, the achievable rate of the system will be dominated by several large eigenmodes that exhibit relatively high effective SNR, which suggests that a large ECCN is generally beneficial under a fixed channel gain ${\rm Tr}\left({\bf H}^H {\bf H}\right)$.

\end{proof}

\begin{proposition}\label{proposition:2}
    In high SNR regimes, the maximization of (\ref{eq:HP}) is equivalent to solving  
    \begin{equation} \label{eq:temp1}
\begin{array}{cl}
\underset{\mathbf{B}}{\operatorname{max}}\underset{1\leq i \leq N_s}{\min} & \|B_{i,i}\|^2\\
\text {\rm s.t.} & {\rm Tr}(\mathbf{P}^H\mathbf{P}) \leq N_t,\\ 
&(\mathbf{HP})^H\mathbf{HP}=\mathbf{B}^H\mathbf{B},
\end{array} 
\end{equation}
where the optimal solution is achieved when the diagonal entries of $\bf B$ are of the same value, resulting in a smaller ECCN of the given channel $\mathbf{H}$.  
\end{proposition}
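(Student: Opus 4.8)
The plan is to retrace the low-SNR argument of Proposition~\ref{proposition:1}, but now in the opposite regime $\sigma^2_z\to 0$, starting from the objective in~\eqref{eq:HP}. Since $d^2_{\rm min}=\frac{6}{M-1}$ is a fixed constant, every argument $d^2_{\rm min}\mathbf{b}_i^H\mathbf{b}_i/\sigma^2_z$ tends to $+\infty$ as $\sigma^2_z\to 0$, so I would invoke the tail expansion $\log_2(1+e^{-x})=\frac{e^{-x}}{\ln 2}\left(1+o(1)\right)$ as $x\to\infty$ to obtain
\[
{\rm R_{L}} = N_sQ_m - \frac{Q_m}{\ln 2}\sum_{i=1}^{N_s} e^{-d^2_{\rm min}\mathbf{b}_i^H\mathbf{b}_i/\sigma^2_z}\left(1+o(1)\right).
\]
Because the leading term $N_sQ_m$ is constant, maximizing ${\rm R_{L}}$ in the high-SNR regime is equivalent to minimizing $\sum_{i=1}^{N_s} e^{-d^2_{\rm min}\mathbf{b}_i^H\mathbf{b}_i/\sigma^2_z}$ subject to the power budget ${\rm Tr}(\mathbf{P}^H\mathbf{P})\le N_t$ and the Gram-matrix identity $(\mathbf{HP})^H\mathbf{HP}=\mathbf{B}^H\mathbf{B}$ inherited from~\eqref{eq:CBD}.

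Next I would reduce this to the max--min form~\eqref{eq:temp1}. As $\sigma^2_z\to 0$ the log-sum-exp concentrates on its largest summand, i.e.
\[
\sigma^2_z\ln\!\left(\sum_{i=1}^{N_s} e^{-d^2_{\rm min}\mathbf{b}_i^H\mathbf{b}_i/\sigma^2_z}\right)\;\longrightarrow\;-\,d^2_{\rm min}\min_{1\le i\le N_s}\mathbf{b}_i^H\mathbf{b}_i,
\]
so, to leading exponential order, minimizing the sum amounts to maximizing $\min_i\mathbf{b}_i^H\mathbf{b}_i$. Since $\mathbf{B}$ is upper bidiagonal, $\mathbf{b}_i^H\mathbf{b}_i = B_{i,i}^2+B_{i-1,i}^2\ge B_{i,i}^2$ with equality when the superdiagonal entry vanishes (in particular for $i=1$), and $B_{i,i}^2$ is exactly the effective gain governing the detection of $s_i$ in the bidiagonal trellis of Fig.~\ref{FigT} once the single superdiagonal interferer has been resolved; replacing the column norm by this operative quantity yields precisely problem~\eqref{eq:temp1}.

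It then remains to solve~\eqref{eq:temp1} and read off the ECCN. The Gram-matrix constraint fixes the singular values of $\mathbf{B}$ (namely those of $\mathbf{HP}$), so the generalized-triangular/majorization relation~\eqref{eq:horn1} applied to the bidiagonal $\mathbf{B}$ fixes the product $\prod_{i=1}^{N_s}B_{i,i}^2$. The arithmetic--geometric mean inequality then yields $\min_i B_{i,i}^2\le\left(\prod_{i=1}^{N_s}B_{i,i}^2\right)^{1/N_s}$ with equality if and only if all $B_{i,i}^2$ coincide; an equal-diagonal bidiagonalization is attainable (e.g.\ via the GMD-type construction of~\cite{GMD}), so it is optimal. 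With all $B_{i,i}$ equal, ${\rm cond}(\mathbf{B})=\max_i B_{i,i}/\min_i B_{i,i}=1$, the smallest attainable ECCN---exactly the asserted conclusion---and any residual freedom in $\mathbf{P}$ is then spent maximizing the geometric mean of the squared singular values of $\mathbf{HP}$ under ${\rm Tr}(\mathbf{P}^H\mathbf{P})\le N_t$.

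I expect the main obstacle to be the passage from $\max\min\mathbf{b}_i^H\mathbf{b}_i$ to $\max\min B_{i,i}^2$, since the superdiagonal entries also enter the column norms. I would handle it by noting that at the equal-diagonal optimum $\min_i\mathbf{b}_i^H\mathbf{b}_i=\min_i B_{i,i}^2$ regardless---column~$1$ satisfies $\mathbf{b}_1^H\mathbf{b}_1=B_{1,1}^2$ exactly, which pins the minimum to the diagonal value---so the two formulations share the same optimizer; alternatively, one can work from the outset with the decision-feedback/BCJR effective SNR $B_{i,i}^2/\sigma^2_z$ rather than the raw column norm, making~\eqref{eq:temp1} the natural surrogate. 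The remaining ingredients---controlling the $o(1)$ terms and making the log-sum-exp concentration rigorous as $\sigma^2_z\to 0$---are routine.
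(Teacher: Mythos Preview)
Your approach matches the paper's: both argue that as $\sigma_z^2\to 0$ the sum $\sum_i\log_2(1+e^{-d^2_{\rm min}\mathbf{b}_i^H\mathbf{b}_i/\sigma_z^2})$ is dominated by the smallest $\mathbf{b}_i^H\mathbf{b}_i$, then pass to the diagonal entries via $\mathbf{b}_i^H\mathbf{b}_i\ge B_{i,i}^2$ and conclude that equal diagonals maximize the minimum under the product constraint~\eqref{eq:horn1}; your write-up is in fact more explicit than the paper's, spelling out the log-sum-exp concentration and the AM--GM step, and your observation that $\mathbf{b}_1^H\mathbf{b}_1=B_{1,1}^2$ pins the minimum at the equal-diagonal point is a nice touch the paper omits. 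One caveat: your attainability claim via the GMD construction is not right---GMD produces an upper \emph{triangular} matrix with equal diagonal, not a bidiagonal one, and the paper itself remarks immediately after the proposition that an equal-diagonal bidiagonalization is ``generally unattainable''~\cite[Sec.~V]{jiang2005geometric}; the proposition should be read as characterizing the optimizer of~\eqref{eq:temp1} rather than asserting it is always realized within the bidiagonal class, and the paper's own proof likewise stops at the characterization.
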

\begin{proof}
     When $\sigma_z^2\rightarrow 0$, the function $\sum_i \log_2(1+\exp(-\frac{d^2_{\rm min}\mathbf{b}_i^H\mathbf{b}_i}{\sigma^2_z}))$ is potentially dominated by the smallest term $\mathbf{b}_i^H\mathbf{b}_i$. From \cite[Sec. III-D]{EqualQR} it has been proved that the minimum distance of a signal after passing through an equivalent channel can be bounded by the diagonal entries of the corresponding equivalent upper triangular matrix. Notably, the matrix $\mathbf{B}$ represents a specific instance of such an upper triangular matrix. Thus, the problem can be approximately by following:
\begin{equation} \label{eq:te1}
\begin{array}{cl}
\underset{\mathbf{B}}{\operatorname{max}}\min & d^2_{\rm min}\mathbf{b}^H_i\mathbf{b}_i \\ 
\text {\rm s.t. }  & {\rm Tr}(\mathbf{P}^H\mathbf{P})\leq N_t,\\
&(\mathbf{HP})^H\mathbf{HP}=\mathbf{B}^H\mathbf{B}.
\end{array} 
\end{equation}  
It follows that
\begin{equation}
    \mathbf{b}_i^H\mathbf{b}_i = \sum_{j=i}^{i-1}\|B_{j,i}\|^2\geq\|B_{i,i}\|^2.
\end{equation} 
Maximizing this lower bound is equivalent to solving ~\eqref{eq:temp1}. Thus, the optimal solution to ~\eqref{eq:temp1} is obtained when all diagonal entries of $\mathbf{B}$ are equal. 


\end{proof}

The physical interpretations of Propositions \ref{proposition:1} and \ref{proposition:2} are highly insightful as follows: In the low SNR regime (or equivalently, at low channel coding rates), the transmitted energy should be concentrated on the dominant channel to maximize the achievable rate. In this case, the adverse effects of channel ill-conditioning are mitigated. Conversely, in the high SNR regime (or equivalently, at high channel coding rates), the achievable rates would be benefited by ensuring all sub-channels having reasonably high effective SNR.
More importantly, the above propositions hold for general MIMO-BICM systems. However, they confirm that no universal channel decomposition solution e.g., $\bf B$, is optimal across the entire SNR regime.

The above conclusions, while seemingly intuitive, remain empirically unverified within MIMO-BICM systems. Moreover, we observe that $\bf B$ plays an important role in the achievable rate performance. In fact, finding an optimal solution for optimizing CBD structure is still an open question, where preliminary results have demonstrated that bidiagonalization with equal diagonal entries is generally unattainable \cite[Sec. V]{jiang2005geometric}. In what follows, we conduct a quantitative analysis of the CBD, aiming to provide capacity approaching designs based on the insights obtained above. 

\section{The GP-CBD Scheme} \label{SEC:GP-CBD}
In this section, we investigate the optimal bidiagonal elements and propose a novel GP-CBD scheme to approach the optimal bidiagonal decomposition. Specifically, we construct the target $\mathbf{B}$ leveraging Givens rotation \cite{golub2013matrix}, which allows for unitary transform between any two eigen-subchannels to alter the ECCN, and is therefore more flexible than the Householder transformation. 

\subsection{CBD Design for $2\times2$ MIMO-BICM}\label{SUBSEC:GP-CBD}
We first investigate a $2\times 2$ channel with effective eigen-subchannel coefficients $\lambda_1$ and $\lambda_2$ to obtain valuable insights. It has been proved that the error-free DFE (error-free interference cancellation based on genie-aided assumption) can achieve the ML performance \cite{EqualQR,1268365}. Therefore, we leverage the DFE concept to help analyzing the diagonal entries by assuming that the interference term in the equivalent bidiagonal matrix can be perfectly eliminated.
In this case, the LLR [cf. (\ref{eq:LLR})] can be approximated as 
\begin{equation}
    L_k = \rho_{o,i} \Delta_k,
\end{equation}
where $\Delta_k = \left\{ \mathop{\min}_{{s}\in \mathcal{M}^{0}} \|\Tilde{y}_i-{s}\|^2 - \mathop{\min}_{{s}\in \mathcal{M}^{1}} \|\Tilde{y}_i-{s}\|^2 \right\}$ is determined by the specific noise, $\rho_{o,i} \triangleq \frac{{B}^2_{i,i}}{\sigma_z^2}$ is the post-processing SNR of the $i$-th layer \cite[Sec. IV]{palomardaniel2006MIMO}, and $\Tilde{y}_i = \frac{y_i}{{B}_{i,i}}$. In high SNR regimes, $\Delta_k$ can be calculated by Monte Carlo simulations, and is proportional to the MED $d^2_{\rm min}$  \cite{olmos2012use}. Hence, in a $2\times 2$ system, we maximize (\ref{eq:RBICM}) by approximating $\Delta_k$ by $d^2_{\rm min}$, which leads to
\begin{equation}\label{eq:RBICMnoise}
\begin{array}{cl}
\underset{{B}_{i,i}}{\operatorname{max}} & \sum_{k=1}^{Q_m}\sum_{i=1}^{2}\left( 1-\mathbb E_{c_k} \left\{\log_2 (1+e^{-d^2_{\rm min}\frac{{B}_{i,i}}{\sigma^2_z}} ) \right\} \right) \\  
\text {\rm s.t. }  & \lambda_{2}\leq B_{2,2} \leq B_{1,1} \leq \lambda_{1}, B_{1,1}B_{2,2}=\lambda_{1}\lambda_{2},
\end{array}  
\end{equation} 
where the constraint is from (\ref{eq:horn1}). Here, the term $\mathbb E_{c_k}\{\cdot\}$ is dependent of $c_k$, then maximizing (\ref{eq:RBICMnoise}) is equivalent to 
\begin{equation}\label{eq:2t2}
\begin{array}{cl}
\underset{{B}_{i,i}}{\operatorname{min}} & \sum^{2}_{i=1}\left(\log_2(1+e^{-\frac{d^2_{\rm min}{B}_{i,i}^2}{\sigma^2_z}})\right) \\ 
\text {\rm s.t. }  & \lambda_{2}\leq B_{2,2} \leq B_{1,1} \leq \lambda_{1}, B_{1,1}B_{2,2}=\lambda_{1}\lambda_{2},
\end{array} 
\end{equation}
(\ref{eq:2t2}) quantifies the gap between the achievable rate and saturation rate $Q_mN_s$ (due to finite-alphabet inputs) at a specific SNR. The solution can be obtained by the following theorem.

\begin{theorem}\label{theorem:2}
For a $2\times 2$ MIMO-BICM channel with two eigen-subchannels $\lambda_{i},i=1,2$, we define $\mu = (\frac{d^2_{\rm min}\lambda_{1}^2}{\sigma_z^2})(\frac{d^2_{\rm min}\lambda_{2}^2}{\sigma_z^2})$ and the threshold $\nu = 1.7$. If $\mu\geq \nu$, the diagonal entries in $\mathbf{B}$ are $B_{1,1}= B_{2,2}= \sqrt{\lambda_1\lambda_2}$; otherwise, the solution can be obtained through a 1-D search within the feasible region.
\end{theorem}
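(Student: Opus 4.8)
The plan is to recast \eqref{eq:2t2} as a one-variable minimization by eliminating one diagonal entry via the product (majorization) constraint $B_{1,1}B_{2,2}=\lambda_1\lambda_2$ from \eqref{eq:horn1}, to identify the equal-diagonal configuration as the natural stationary point of the resulting scalar objective, and then to decide---as a function of $\mu$---whether that stationary point is the global minimizer, using a curvature test that is sharpened into the explicit threshold $\nu$.

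\emph{Step 1 (reduction to one variable).} Set $x\triangleq d^2_{\rm min}B_{1,1}^2/\sigma_z^2$. The constraint fixes $d^2_{\rm min}B_{2,2}^2/\sigma_z^2=\mu/x$, and the ordering $\lambda_2\le B_{2,2}\le B_{1,1}\le\lambda_1$ becomes $x\in[\sqrt\mu,\;d^2_{\rm min}\lambda_1^2/\sigma_z^2]$, whose left endpoint $x=\sqrt\mu$ is exactly the equal-diagonal point $B_{1,1}=B_{2,2}=\sqrt{\lambda_1\lambda_2}$ (feasible since $\lambda_2\le\sqrt{\lambda_1\lambda_2}\le\lambda_1$) and whose right endpoint is the SVD solution $B_{1,1}=\lambda_1,\ B_{2,2}=\lambda_2$. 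Then \eqref{eq:2t2} is equivalent to minimizing $F(x)\triangleq g(x)+g(\mu/x)$ over this interval, with $g(u)\triangleq\log_2(1+e^{-u})$. Because $F(x)=F(\mu/x)$, the point $x=\sqrt\mu$ is automatically a stationary point of $F$; it then remains to compare it with the other endpoint and with any interior stationary point.

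\emph{Step 2 (curvature test and the threshold).} A short computation gives $F'(x)=\bigl(\eta(\mu/x)-\eta(x)\bigr)/(x\ln 2)$ with $\eta(u)\triangleq u/(1+e^u)$, and the curvature of $F$ at $x=\sqrt\mu$ equals, up to a positive factor, $\psi(\sqrt\mu)$ where $\psi(u)\triangleq e^{u}(u-1)-1$ (this uses the identity $g'(u)+u\,g''(u)\propto\psi(u)$). The function $\psi$ is strictly increasing on $(0,\infty)$ with a unique root $u^\star\approx 1.278$, so $x=\sqrt\mu$ is a strict local minimum of $F$ precisely when $\mu>(u^\star)^2\approx 1.63$; the value $\nu=1.7$ is a rounded constant comfortably above this threshold. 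When $\mu<\nu$---in fact a local \emph{maximum} already for $\mu<(u^\star)^2$---the equal-diagonal point need not be optimal, and the minimizer of the smooth scalar $F$ over its compact interval is obtained by an inexpensive one-dimensional line search over $B_{2,2}\in[\lambda_2,\sqrt{\lambda_1\lambda_2}]$; this yields the ``otherwise'' statement.

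\emph{Step 3 (from local to global for $\mu\ge\nu$; the main obstacle).} It remains to promote the local minimum to a global one, i.e.\ to show $F'(x)\ge 0$ on the whole feasible interval; writing $q\triangleq\mu/x\le\sqrt\mu$, this amounts to $R(q)\triangleq\eta(q)-\eta(\mu/q)\ge 0$. Exploiting that $\eta$ is unimodal---increasing on $(0,u^\star)$, decreasing on $(u^\star,\infty)$, with $\eta(0^{+})=\eta(+\infty)=0$---and that $\mu/q\ge\sqrt\mu>u^\star$: when $q\ge u^\star$ both arguments lie on the decreasing branch with $q\le\mu/q$, so $R(q)\ge 0$ immediately; as $q\to 0^{+}$ the near-linear growth $\eta(q)\approx q/2$ dominates the double-exponentially small $\eta(\mu/q)$, so $R(q)>0$ there as well; and $R(\sqrt\mu)=0$ with $R'(\sqrt\mu)=2\eta'(\sqrt\mu)<0$ controls a left neighbourhood of $\sqrt\mu$. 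The delicate regime is intermediate $q\in(0,u^\star)$, where one must rule out a negative interior minimum of $R$: at any interior stationary point a direct computation relates $\psi(q)$ and $\psi(\mu/q)$, and combined with $\mu\ge\nu$ this forces $R>0$ there. Pinning down this last estimate is where the numerical value of $\nu$ is genuinely used and is the main technical difficulty; it is simplified by the observation that $\partial R/\partial\mu=-\eta'(\mu/q)/q>0$, so $R$ is increasing in $\mu$ on the relevant range and one only needs to verify $R\ge 0$ at the single value $\mu=\nu$---a finite one-dimensional check. Granting this, $F$ is nondecreasing on the feasible interval (strictly so in the interior when $\mu>(u^\star)^2$), hence its global minimizer is $x=\sqrt\mu$, i.e.\ $B_{1,1}=B_{2,2}=\sqrt{\lambda_1\lambda_2}$, which completes the proof.
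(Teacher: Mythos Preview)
Your proposal is correct and follows the same route as the paper: the identical change of variable $x=d^2_{\min}B_{1,1}^2/\sigma_z^2$, the same scalar objective $g(x)=\log_2(1+e^{-x})+\log_2(1+e^{-\mu/x})$ on $[\sqrt\mu,\,d^2_{\min}\lambda_1^2/\sigma_z^2]$, the stationary point $x=\sqrt\mu$, and the sign of the second derivative there---which the paper writes as $e^{\sqrt\mu}-\tfrac{1}{\sqrt\mu}(1+e^{\sqrt\mu})$, i.e.\ your $\psi(\sqrt\mu)$---to obtain the threshold. Two minor notes: your root $(u^\star)^2\approx 1.63$ is in fact the exact zero of the paper's expression, so the paper's $\nu=1.7$ is a (slightly conservative) numerical value rather than a distinct threshold; and your Step~3 is more careful than the paper, which simply asserts ``if $g''(\sqrt\mu)\ge 0$ then $g'(x)\ge 0$'' without the unimodality-of-$\eta$ argument you sketch.
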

\begin{proof}
To simplify the form of problem (\ref{eq:2t2}), we define $x =B_{1,1}^2 \frac{d^2_{\rm min}}{\sigma_z^2}$. Then, (\ref{eq:2t2}) can be re-expressed by a single-variable optimization problem over $x$:
\begin{equation} \label{eq:ProblemEffe}
\begin{array}{cl}
\underset{x}{\operatorname{min}} & g(x) = \log_2(1+e^{-x})+\log_2(1+e^{-\frac{\mu}{x}}) \\ 
\text { s.t. }  & \lambda_1^2\frac{d^2_{\rm min}}{\sigma_z^2} \geq x \geq \sqrt{\mu}. 
\end{array}    
\end{equation}
with $\mu = \prod_{i=1}^2 (\frac{d^2_{\rm min}\lambda_{i}^2}{\sigma_z^2})$. Here, $g(x)$ represents the gap between the achievable rate and the saturation rate. 
This constraint leverages the symmetry by reducing (\ref{eq:2t2}) to a function of $x$, allowing us to solve $g(x)$ at the feasible region. The first-order derivatives of $g(x)$ is
\begin{equation}\label{eq:firstderive}
\begin{aligned}
        g'(x) = \frac{1}{\ln 2}\frac{-(1+e^{\frac{\mu}{x}})+\frac{\mu}{x^2}(1+e^{x})}{(1+e^{\frac{\mu}{x}})(1+e^{x})}
\end{aligned}
\end{equation} and the second-order derivation of $g(x)$ is
\begin{equation}
\begin{aligned}
    & g''(x)= \frac{1}{\ln 2}\left(\frac{e^{x}}{(1+e^{x})^2}+\frac{-\frac{2\mu}{x^3}(1+e^{\frac{\mu}{x}})+\frac{\mu^2}{x^4}e^{\frac{\mu}{x}}}{(1+e^{\frac{\mu}{x}})^2}\right).
\end{aligned}
\end{equation}
It follows directly from (\ref{eq:firstderive}) that $x = \sqrt{\mu}$ is a stationary point, as $g'(\sqrt{\mu})=0$. The second derivative $g''(x)$ at $x=\sqrt{\mu}$ is 
\begin{equation}
\begin{aligned} \label{eq:2time2numeator}
    &g''(\sqrt{\mu})= \frac{2e^{\sqrt{\mu}}-\frac{2}{\sqrt{\mu}}(1+e^{\sqrt{\mu}})}{\ln 2 (1+e^{\sqrt{\mu}})^2}. 
\end{aligned}
\end{equation}
If $g''(\sqrt{\mu}) \geq 0$ holds, then $g'(x)\geq 0$, confirming that $\sqrt{\mu}$ is a global minimum point. We now turn our attention to the numerator in (\ref{eq:2time2numeator}) to investigate its positivity. It is straightforward to show that the function $g''(\sqrt{\mu})$ is strictly increasing and thus has only one zero point
\begin{equation}
    e^{\sqrt{\mu}}-\frac{1}{\sqrt{\mu}}(1+e^{\sqrt{\mu}}) = 0,
\end{equation}
which occurs at $\mu = 1.7$ according to numerical calculation. We set the threshold $\nu = 1.7$. If $\mu \geq \nu$, we have the optimal solution $x^2 = \mu$, i.e., $B_{1,1} = B_{2,2}= \sqrt{\lambda_1\lambda_2}$; if $\mu < \nu$, the diagonal entries can be obtained through the 1-D search within the feasible region.
\end{proof}

Notice that the above solution only calculates the diagonal entries of $\bf B$. After determining those entries, we employ the Givens rotation to obtain the bidiagonal matrix as 
\begin{equation}\label{eq:wb}
\mathbf{B} = \hat{\mathbf{G}}\left[\begin{matrix}
    \lambda_1 & 0 \\
    0 & \lambda_2
  \end{matrix} \right]\tilde{\mathbf{G}}=\left[\begin{matrix}
    B_{1,1} & m \\
    0 & B_{2,2}
  \end{matrix} \right],
\end{equation}
where the Givens matrices are
\begin{equation} \label{eq:t2}
  \hat{\mathbf{G}} = \frac{1}{B_{1,1}} \left[\begin{matrix}
    c\lambda_1 & s\lambda_2 \\
    -s\lambda_2 & c\lambda_1
  \end{matrix} \right], \quad \tilde{\mathbf{G}} = \left[\begin{matrix}
    c & -s \\
    s & c
  \end{matrix}\right].
\end{equation}
with $c = \sqrt{\frac{|B_{1,1}|^2-|\lambda_2|^2}{|\lambda_1|^2-|\lambda_2|^2}}$, $s=\sqrt{1-c^2}$, and $m=\frac{sc(|\lambda_1|^2-|\lambda_2|^2)}{B_{1,1}}$. Here, $B_{2,2}$ and $B_{1,1}$ are the parameters obtained from Theorem \ref{theorem:2}.

\textit{Remark 1: } In fact, Theorem \ref{theorem:2} goes beyond the conclusions of Proposition \ref{proposition:2} by proposing a quantitative criterion of the problem. Specifically, we employ $g(x)$ to quantify the gap between the achievable rate and the saturation rate. Given the modulation order $M$ and noise power $\sigma_{z}^2$, the condition of a ``well-conditioned'' channel is established as $\mu \geq 1.7$ in a $2\times 2$ MIMO-BICM system. For such cases, performing the appropriate rotation on the eigen-subchannels and setting equal diagonal entries in $\mathbf{B}$ minimizes $g(x)$, thereby reducing the gap between the achievable rate and the saturation rate. To better illustrate the effect of $\mu$, we depict the relationship of $\mu$ and $g(x)$ in Fig. \ref{Fig:gx}. It is observed that for a small value $\mu<1,7$ (e.g., $\mu=1$), optimizing $g(x)$ results in minor gap reduction, and yields negligible rate gains ($<0.05$). In contrast, in the well-conditioned channel with large $\mu$ (e.g., $\mu=8$), the gain becomes significant by reducing the ECCN through (\ref{eq:wb}).

\textit{Remark 2: }
The threshold $\nu=1.7$ offers an alternative perspective on the performance degradation of the SVD-based scheme with a large ECCN at high SNRs. In Fig. \ref{Fig:gx}, the horizontal dashed line intersects $g(x)$ at two points $\frac{d^2_{\rm min}\lambda_{1}^2}{\sigma_z^2}$ and $\frac{d^2_{\rm min}\lambda_{2}^2}{\sigma_z^2}$. Given the channel $\mathbf{H}$ and $d_{\rm min}^2$, then the ECCN ($\frac{\lambda_1}{\lambda_2}$) of the SVD is fixed. As the SNR increases, the optimization space of $g(x)$ (the gap of the dash line and the minimal point $-\star-$) grows larger, which drives the SVD-based scheme far from the optimal achievable rate performance.

\begin{figure}[t!]
\centering
\includegraphics[width=3in]{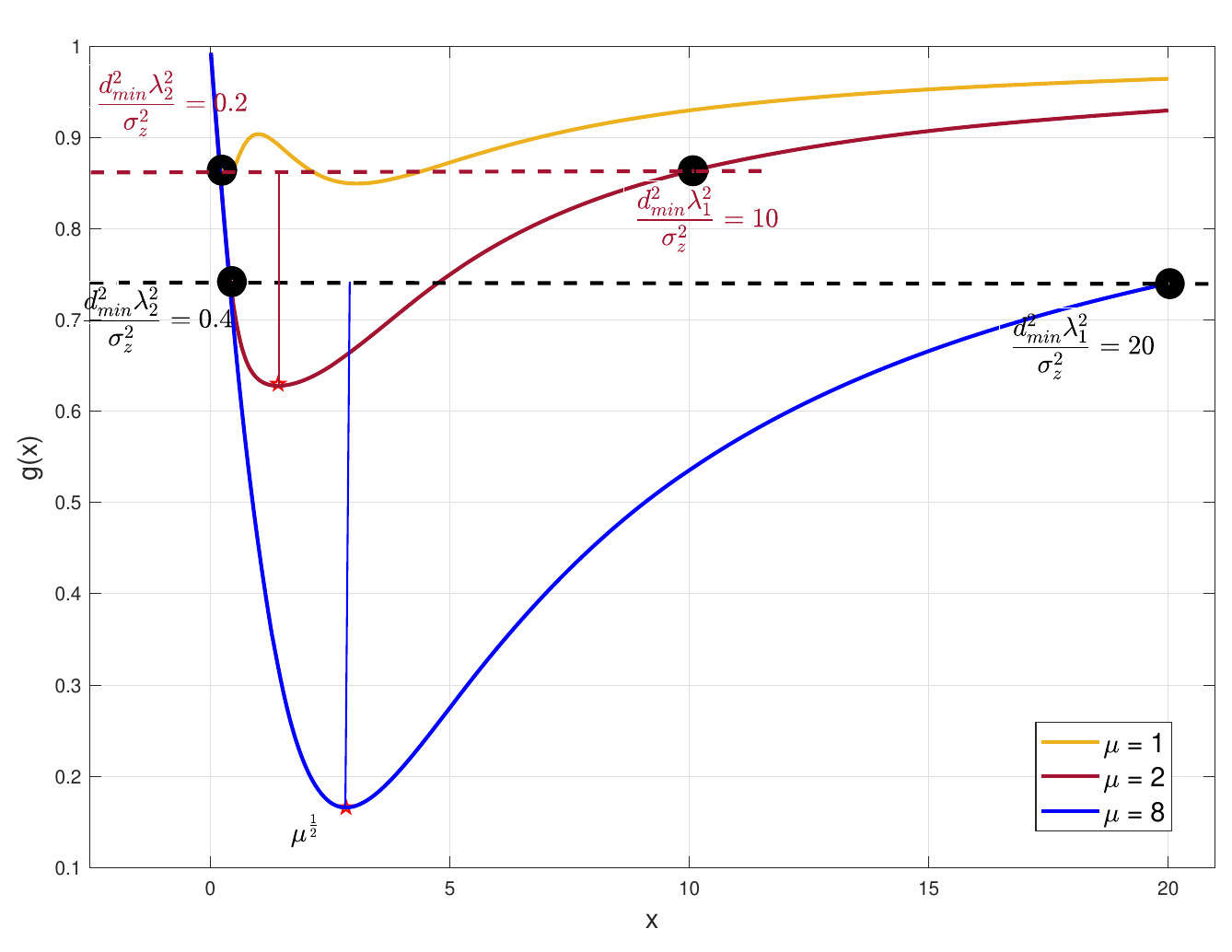}
\caption{ The function of $g(x)$ with different values of $\mu$. The two black dash lines show ECCN $=7.07$ of the SVD at different SNRs.}
\label{Fig:gx}
\end{figure}

\subsection{GP-CBD Design}
We next extend the previous analysis to the  general MIMO-BICM systems with $N_r > 2$ and $N_t > 2$.
In this case, (\ref{eq:2t2}) extends to 
\begin{equation}\label{eq:NtN}
\begin{array}{cl}
\underset{{B}_{i,i}}{\operatorname{min}} & \sum^{N_s}_{i=1}\left(\log_2(1+e^{-\frac{d^2_{\rm min}{B}_{i,i}^2}{\sigma^2_z}})\right) \\ 
\text {\rm s.t. }  & (\ref{eq:horn1}) ,
\end{array} 
\end{equation}
which is challenging to solve. To make (\ref{eq:NtN}) tractable, we divide it into multiple subproblems [cf. (\ref{eq:2t2})] by strategically pairing the eigen-subchannels [cf. (\ref{eq:svd})] two by two, i.e., $\{\lambda_i,\lambda_j\}$. Each eigen-subchannel is paired only once. Without loss of generality, we consider $\lambda_i>\lambda_j$. The pairing follows $(i,j), i,j \in \{1,2,\cdots,N_s\}$, forming what we refer to as ``paired subchannel''. We then apply Givens rotation to each paired subchannel to obtain a set of subbidiagonal matrices, which collectively form the final $\mathbf{B}$. This approach offers two key benefits: i) strategic eigen-subchannel pairing facilitates precise tuning of the diagonal entries in each sub-bidiagonal matrix to minimize (\ref{eq:NtN}); ii) the resulting bidiagonal matrix exhibits a block-diagonal structure, enabling parallel processing at the receiver and significantly reducing computational complexity, referred to as the GP-CBD.   

The pairing of eigen-subchannels is important. The eigen-subchannels are ordered as $\lambda_1\geq\lambda_2\geq\cdots\geq\lambda_{N_s}$. As evidenced by \cite{EqualQR}, optimizing the worst subchannel necessitates a high SNR; otherwise, insufficient SNR conditions lead to performance degradation. In contrast, (\ref{eq:optimization}) targets specific eigen-subchannels that can be optimized to enhance the achievable rate. From Remark 1, prioritizing rotation on well-conditioned paired subchannels potentially enhances performance. Therefore, our focus is on identifying well-conditioned subchannels that is demarcated by $\lambda_{N},1<N\leq N_s$ given a certain SNR. According to Theorem~\ref{theorem:2}, $\lambda_{N}$ can be determined by solving the following problem:
\begin{subequations}\label{eq:optimization}
\begin{align}
    & \underset{N}{\text{max}} & & \frac{\lambda_1^2}{\lambda_N^2} \label{eq:objective} \\
    & \text{\rm s.t. } & & (\frac{d^2_{\rm min}\lambda_1^2}{\sigma_z^2})\cdot(\frac{d^2_{\rm min}\lambda_N^2}{\sigma_z^2})\geq 1.7. \label{eq:constraint2}
\end{align}
\end{subequations}
According to Theorem \ref{theorem:2},  (\ref{eq:constraint2}) implies that two eigen-subchannels selected from $\lambda_{1},\cdots,\lambda_{N}$ generally form a well-conditioned pair. After determining $N$, we pair the eigen-subchannel as $(n, N - n + 1), n \in \{1, 2, \dots, \lfloor \frac{N}{2} \rfloor\}$. These paired subchannels are denoted as $\mathbf{\Tilde{\Lambda}}_1={\rm diag}\{\lambda_1, \lambda_{N}\},\mathbf{\Tilde{\Lambda}}_2={\rm diag}\{\lambda_2, \lambda_{N-1}\}, \cdots, \mathbf{\Tilde{\Lambda}}_{\lfloor \frac{N}{2} \rfloor}={\rm diag}\{\lambda_{\lfloor \frac{N}{2} \rfloor}, \lambda_{N-\lfloor \frac{N}{2} \rfloor+1}\}$. Performing Givens rotations on these eigen-subchannels generally improves the performance. The eigen-subchannels $\lambda_{N+1},\cdots,\lambda_{N_s}$ remain diagonal, as their rotation provides negligible gain. This case can be viewed as a special case of CBD, where the subdiagonal entries are zero. This approach maximizes the MED of the smallest optimizable eigen-subchannel $\lambda_N$, thereby enhancing performance. Therefore, the optimal solution to (\ref{eq:NtN}) is approximated through a feasible way by a strategic pairing approach and optimal rotation of each paired subchannel. We shall note that this pairing scheme, while not necessarily optimal, offers a straightforward pairing criterion, facilitating practical implementation. Moreover, this process adaptively adjusts the ECCN under varying SNRs, thereby enhancing the achievable rate and BER performance, as demonstrated by numerical results. 

We then rearrange these paired subchannels to facilitate the GP-CBD process. This rearrangement is accomplished by left- and right-multiplying the appropriate permutation matrices \cite[Ch-1.2.8]{golub2013matrix}. As an example, we consider the case where $N_s = 8$ with ${\lambda_1 \geq \cdots\geq\lambda_8}$. If the solution of (\ref{eq:optimization}) is $N=6$, we group these eigen-subchannels into four paired subchannels $\mathbf{\Tilde{\Lambda}}_1={\rm diag}\{\lambda_1, \lambda_6\}$, $\mathbf{\Tilde{\Lambda}}_2={\rm diag}\{\lambda_2, \lambda_5\}$, $\mathbf{\Tilde{\Lambda}}_3={\rm diag}\{\lambda_3, \lambda_4\}$, and
$\mathbf{\Tilde{\Lambda}}_4={\rm diag}\{\lambda_7, \lambda_8\}$. Then
\begin{equation}\label{eq:SVDRERANGE}
\begin{aligned}
   \Tilde{\mathbf{\Lambda}} &= \mathbf{\Gamma}_l \mathbf{\Lambda} \mathbf{\Gamma}_r = \rm blkdiag (\mathbf{\Tilde{\Lambda}}_1,\mathbf{\Tilde{\Lambda}}_2,\mathbf{\Tilde{\Lambda}}_3,\mathbf{\Tilde{\Lambda}}_4),
\end{aligned}
\end{equation}
where $\mathbf{\Gamma}_l$ and $\mathbf{\Gamma}_r$ are the desired permutation matrices. Then Givens matrices are performed on each $\mathbf{\Tilde{\Lambda}}_i$ to obtain subbidiagonal matrices. For the general case, the whole channel matrix $\mathbf{H}$ can be decomposed as
\begin{equation} \label{eq:PLCBD}
\begin{aligned}
  & \mathbf{H} = \mathbf{U\Lambda V}^H = \underbrace{\mathbf{U}\mathbf{\Gamma}^H_l}_{\Tilde{\mathbf{U}}}\Tilde{\mathbf{\Lambda}} \underbrace{\mathbf{\Gamma}^H_r \mathbf{V}^H}_{\Tilde{\mathbf{V}}^H} = \Tilde{\mathbf{U}}\Tilde{\mathbf{\Lambda}}\Tilde{\mathbf{V}}^H\\ 
  & = \underbrace{\Tilde{\mathbf{U}}
  \begin{bmatrix} 
\hat{\mathbf{G}}_1 & & \\
 & \ddots &  \\
 &  & \hat{\mathbf{G}}_T \\
\end{bmatrix}}_{\mathbf{Q}}
\underbrace{\begin{bmatrix} 
\mathbf{B}_1 &  & \\
 & \ddots &  \\
 &  & \mathbf{B}_T \\
\end{bmatrix}}_{\mathbf{B}_{\rm GP}} 
\underbrace{\begin{bmatrix} 
\Tilde{\mathbf{G}}_1 & & \\
  & \ddots &  \\
 &  & \Tilde{\mathbf{G}}_T \\
\end{bmatrix}\Tilde{\mathbf{V}}^H}_{\mathbf{P}^H} \\
    & = \mathbf{Q}\mathbf{B}_{\rm GP}\mathbf{P}^H,
\end{aligned}
\end{equation}
where $\mathbf{Q}$ and $\mathbf{P}$ are unitary matrices. Employing $\mathbf{P}$ as the precoder, the effective system model is $\Bar{\mathbf{y}} = \mathbf{Q}\mathbf{B}_{\rm GP}\mathbf{s} + \Bar{\mathbf{z}} $, which can be expressed as
\begin{equation}
\Bar{\mathbf{y}} = \begin{bmatrix}\mathbf{Q}_1 \cdots \mathbf{Q}_T\end{bmatrix}
\begin{bmatrix} 
            \mathbf{B}_1 &  & \\
                & \ddots &  \\
            &  & \mathbf{B}_T \\
    \end{bmatrix}\begin{bmatrix}\mathbf{s}_1 \\ \vdots \\ \mathbf{s}_T\end{bmatrix}+\begin{bmatrix}\Bar{\mathbf{z}}_1 \\ \vdots \\ \Bar{\mathbf{z}}_T\end{bmatrix},
\end{equation}
with $\mathbf{Q}_t \in \mathbb C^{N_r\times 2},\mathbf{B}_t \in \mathbb R^{2\times 2},t=1,2,\cdots,T$. Notice that $\mathbf{Q}^H_t\mathbf{Q}_t = \mathbf{I}_2$ and $\mathbf{Q}_t^H$ can be utilized as the sub-equalizer of each subchannel. Then, the effective signal transmission model becomes
\begin{equation}\label{eq:gpcbd}
    \mathbf{y}_t = \mathbf{B}_t \mathbf{s}_t + \mathbf{z}_t,\quad,t=1,2\cdots,T,
\end{equation}
where $\mathbf{y}_t = \mathbf{Q}_t^H\Bar{\mathbf{y}}, \mathbf{z}_t = \mathbf{Q}_t^H\Bar{\mathbf{z}},t=1,2,\cdots,T$. Hence, the received signal vector can be partitioned into $T$ blocks that can be processed independently (i.e., in parallel). 

On top of the GP-CBD given in~\eqref{eq:gpcbd}, we can consider power allocation among different eigen-subchannels. Let $\mathbf{\Phi} \in \mathbb R^{N_s\times N_s}$ be the diagonal power allocation matrix, which is calculated according to the subchannels of $\bf B_{\rm GP}$, i.e., $\rm diag\{\bf B_{\rm GP}\}$. Then we redefine the precoding matrix $\mathbf{F}$ by 
\begin{equation}
    \mathbf{F} \triangleq \mathbf{P\Phi}^{\frac{1}{2}}.
\end{equation}
Then we can re-express \ref{eq:gpcbd} as 
\begin{equation}
    \mathbf{y}_t = \mathbf{B}_t\mathbf{\Phi}_t^{\frac{1}{2}} \mathbf{s}_t + \mathbf{z}_t,\quad,t=1,2\cdots,T,
\end{equation}
where ${\bf \Phi}_t^\frac{1}{2}$ is the power allocation submatrix of each subbidiagonal matrix. Such a parallel bidiagonalization process would not only produce an improved achievable rate, but also facilitate the parallel equalization/detection of multiple data streams at the receiver. The whole algorithm framework is outlined in Algorithm 1. 

\begin{algorithm}[t!]
    \caption{The GP-CBD Scheme}
    \renewcommand{\algorithmicrequire}{\textbf{Input:}}
    \renewcommand{\algorithmicensure}{\textbf{Output:}}
    \begin{algorithmic}[1]
    \Require The noise variance $\sigma^2_z$; the MED $d^2_{\rm min}$; the eigen subchannels $\lambda_1, \lambda_2,\cdots,\lambda_{N_s}$; 
    \Ensure The GP-CBD decomposition.
    \State Set the threshold $\nu = 1.7$.
    \For{$m =1:N_s$}
    \State Set $\mu=\left( \frac{d^2_{\rm min}\lambda^2_1}{\sigma_z^2}\right)\cdot\left( \frac{d^2_{\rm min}\lambda^2_m}{\sigma_z^2}\right)$.
    \If{$\mu < \nu$}
    \State Break;
    \EndIf    
    \EndFor
    \State Set $N=m-1$.
    \State Obtain the paired subchannels as $(n, N - n + 1)$ for $n \in \{1, 2, \dots, \lfloor \frac{N}{2} \rfloor\}$, while leaving other eigen-subchannels unchanged.
    \State Perform permutation and Givens rotation on these paired subchannels as (\ref{eq:PLCBD}) to achieve the GP-CBD.
    \end{algorithmic}\label{alg.simplified2}
\end{algorithm}

\begin{table*}[htb]
\caption{COMPLEXITY COMPARISON OF DIFFERENT SCHEMES}
\label{table.env}
\centering
\begin{tabular}{|c||c|c|c|}
 \hline
   & \textbf{Design Complexity} & \textbf{Implementation Complexity} & \textbf{equivalent channel} \\
 \hline
   \textbf{GP-CBD} & $\mathcal{O}(N_tN_r^2)$ & $\mathcal{O}(2M)$ & \textbf{Real Bidiagonal}\\
 \hline
   \textbf{CBD in \cite{CBDbib}
} & $\mathcal{O}(N_tN_r^2)$ & $\mathcal{O}(MN_s)$ & \textbf{Real Bidiagonal} \\
 \hline
   \textbf{SVD-MMSE} & $\mathcal{O}(N_tN_r^2)$ & $\mathcal{O}(M)$ & \textbf{Real Diagonal}\\
 \hline
   \textbf{UCD-DP} & $\mathcal{O}(N_tN^2_r)$ & $\mathcal{O}(MN_s + N_s^2)$ & \textbf{Real Diagonal}\\
 \hline
\textbf{Scheme in \cite{maleki2024precoding}
} & $\mathcal{O}(N_tN^2_r)$ & $\mathcal{O}(M^2)$ & \textbf{Complex $2\times 2$ matrix} \\
 \hline
\textbf{Scheme in \cite{PL-CBD}} & $\mathcal{O}(N_tN^2_r)$ & $\mathcal{O}(JM), J=2,4$& \textbf{Real Bidiagonal}\\
 \hline
 
\end{tabular} \label{Table:ComplexAnalysis}
\end{table*}


\subsection{Complexity Analysis} \label{SUB:COMPLEX}

The proposed GP-CBD scheme includes both the transceiver design and implementation. Specifically, the transceiver design consists of the SVD operation and the submatrix transformations using Givens rotation, and these rotation can be performed in parallel. Then the design complexity is $\mathcal{O}(N_r^2N_t)$. For the implementation of the GP-CBD, both the precoder $\mathbf{F}$ and equalizer $\mathbf{Q}$ in (\ref{eq:CBD}) are inherently block diagonal and can be operated in parallel with a same computational complexity as the SVD-MMSE. The detection after equalization can be also performed in parallel, with a complexity $\mathcal{O}(2M)$. However, for other near-optimal schemes, such as, SD or the uniform channel decomposition (UCD) combining dirty paper coding (UCD-DP), they have a higher complexity as summarized in Table \ref{Table:ComplexAnalysis}. 



\section{Numerical Results}\label{SEC:results}

This section presents extensive numerical simulations to evaluate the ECCN, the achievable rate, and BER performance of the proposed GP-CBD scheme.
\begin{figure*}[!ht]
\begin{center}
    \begin{minipage}[b]{0.32\textwidth} 
        \centering
        \includegraphics[width=2.25in, height=1.7in]{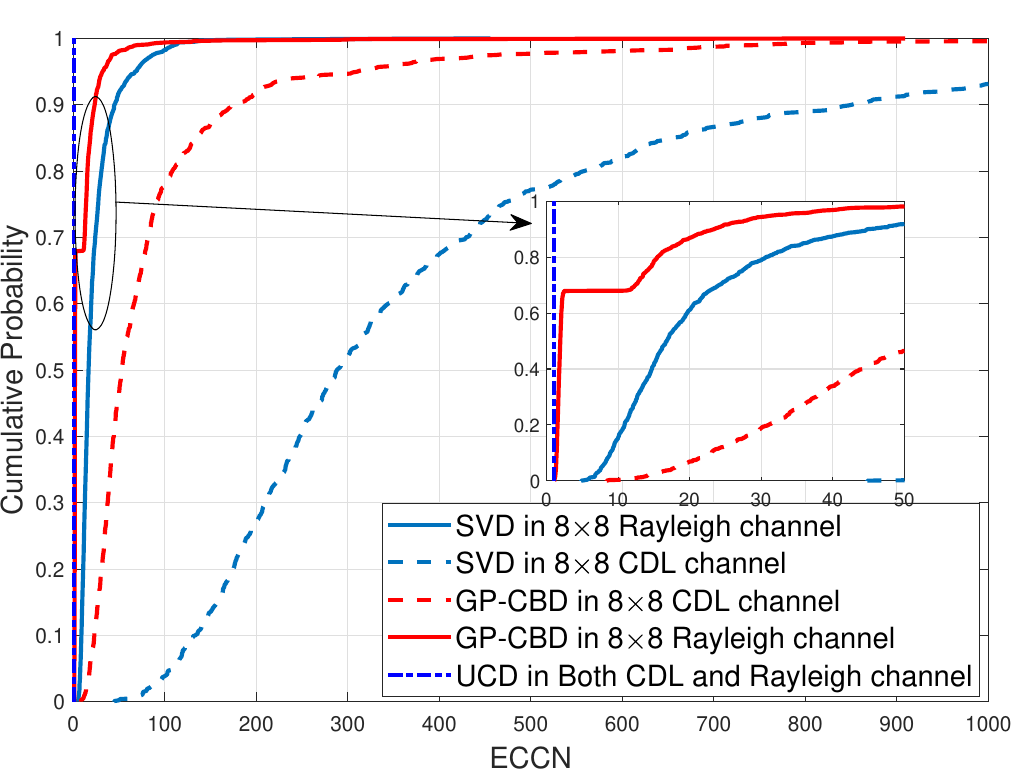}
        \subcaption{} 
    \end{minipage}
    \begin{minipage}[b]{0.32\textwidth} 
        \centering
        \includegraphics[width=2.25in, height=1.7in]{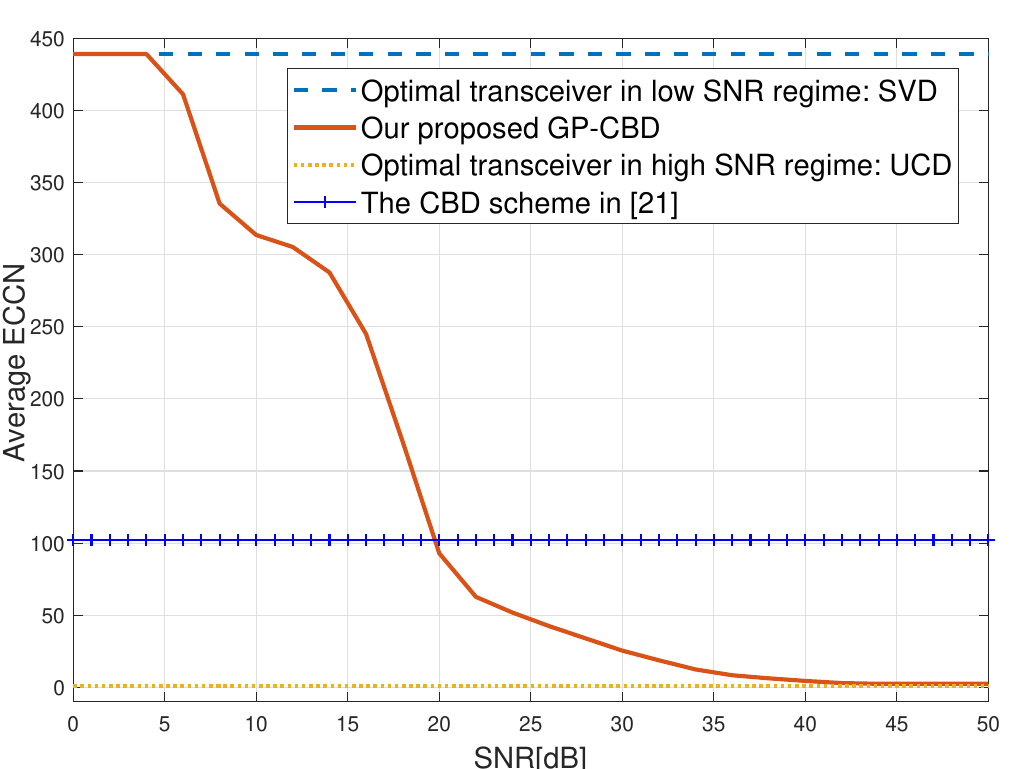}
        \subcaption{}
    \end{minipage}
    \begin{minipage}[b]{0.32\textwidth} 
        \centering
\includegraphics[width=2.25in, height=1.7in]{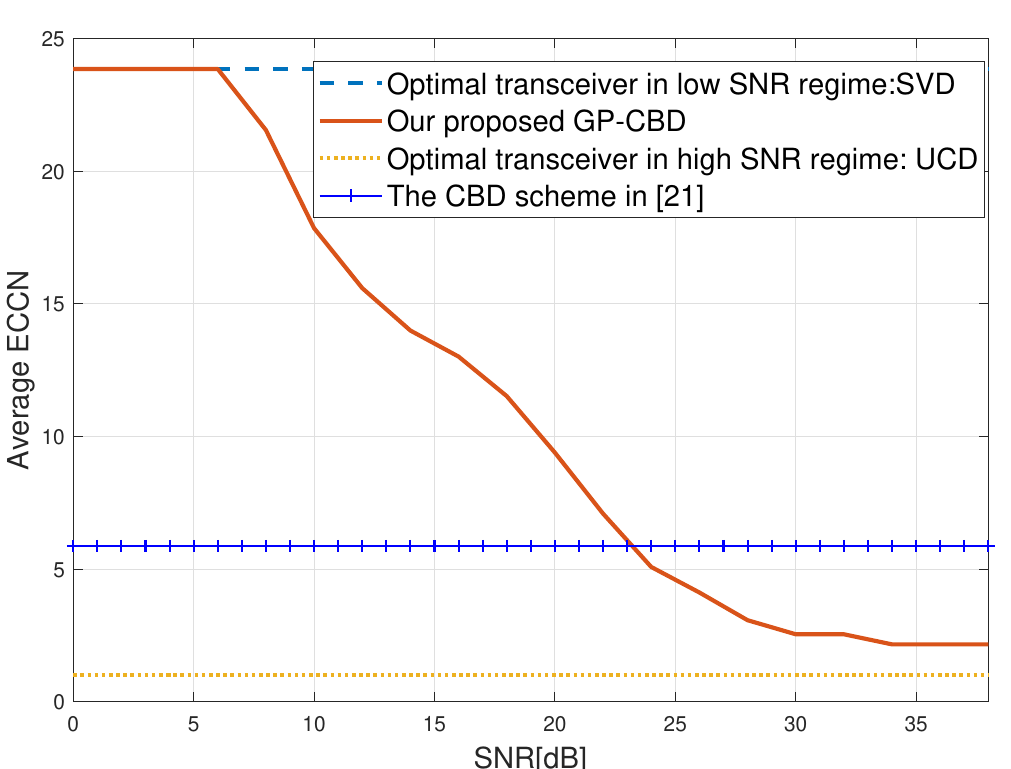}
        \subcaption{}
    \end{minipage}
\end{center}
    \caption{(a): The cumulative probability for the ECCN of different transceivers with SNR $=30$ dB. (b) and (c): The average ECCN of various transceivers vs. SNRs under CDL and Rayleigh channels, respectively.}
    \label{Fig:CN}
\end{figure*}

\subsection{ECCN Evaluation}
We generate the MIMO with i.i.d. Rayleigh fading channel \cite{liu2024leveraging} and CDL channel model specified in 3GPP TR 38.901 \cite{38901}. Each entry of the Rayleigh fading channel is modeled as an i.i.d. complex Gaussian variable $\mathcal{CN}(0,1)$. The CDL channel is generated based on the \textit{nrCDLChannel} function in MATLAB${ ^\circledR}$. The carrier frequency is set to $3.5$ GHz and the channel delay spread is $30$ ns. The antenna array is vertical polarized. The delays and angles of arrival and departure for the multipaths within a cluster are generally similar, which leads to a high spatial correlation in the MIMO channel. This high spatial correlation results in channel ill-conditioning, which is a characteristic challenge in massive MIMO systems.  

Fig. \ref{Fig:CN} evaluates the ECCN performance of different transceiver designs. Specifically, Fig. \ref{Fig:CN}(a) illustrates the cumulative distribution function of condition numbers under the $8\times8$ Rayleigh and CDL channels. The proposed GP-CBD method effectively reduces the ECCN compared to the SVD method at the SNR $=30$ dB. Figs. \ref{Fig:CN}(b) and \ref{Fig:CN}(c) show the average ECCN of different transceivers across the all SNR intervals. It can be observed that, with the same number of antennas, the CDL channel is more ill-conditioned than the Rayleigh channel. Since the UCD, SVD, and CBD methods are operated by a fixed-mode channel decomposition, they possess the fixed ECCNs. While the GP-CBD method efficiently adapts its ECCN across varying SNR levels. In the next subsection, we further show the superior performance of the proposed GP-CBD scheme in terms of the achievable rate.

\begin{figure*}[ht!]
\begin{center}
    \begin{subfigure}[b]{0.32\textwidth}
        \centering
        
        \includegraphics[width=2.25in, height=1.7in]{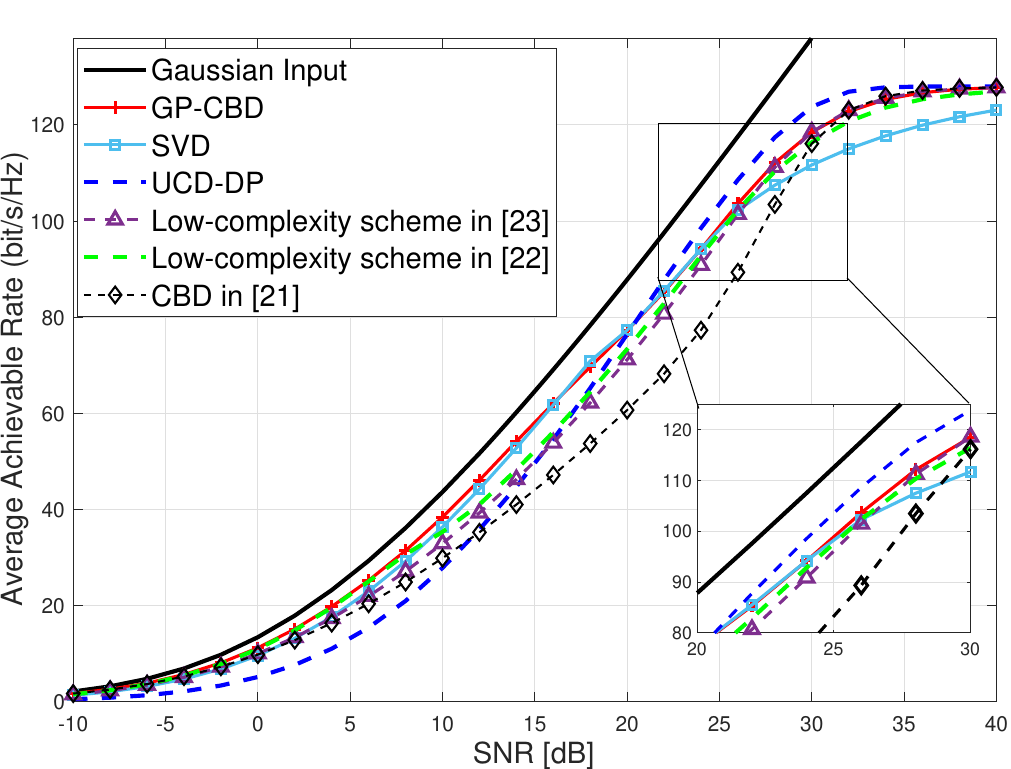}
        \subcaption{{$16\times 16$ MIMO-BICM channel with 256-QAM transmission.}} 
    \label{Fig:16×16_256QAM_MI}    
    \end{subfigure}
    \begin{subfigure}[b]{0.32\textwidth} 
        \centering
        \includegraphics[width=2.25in, height=1.7in]{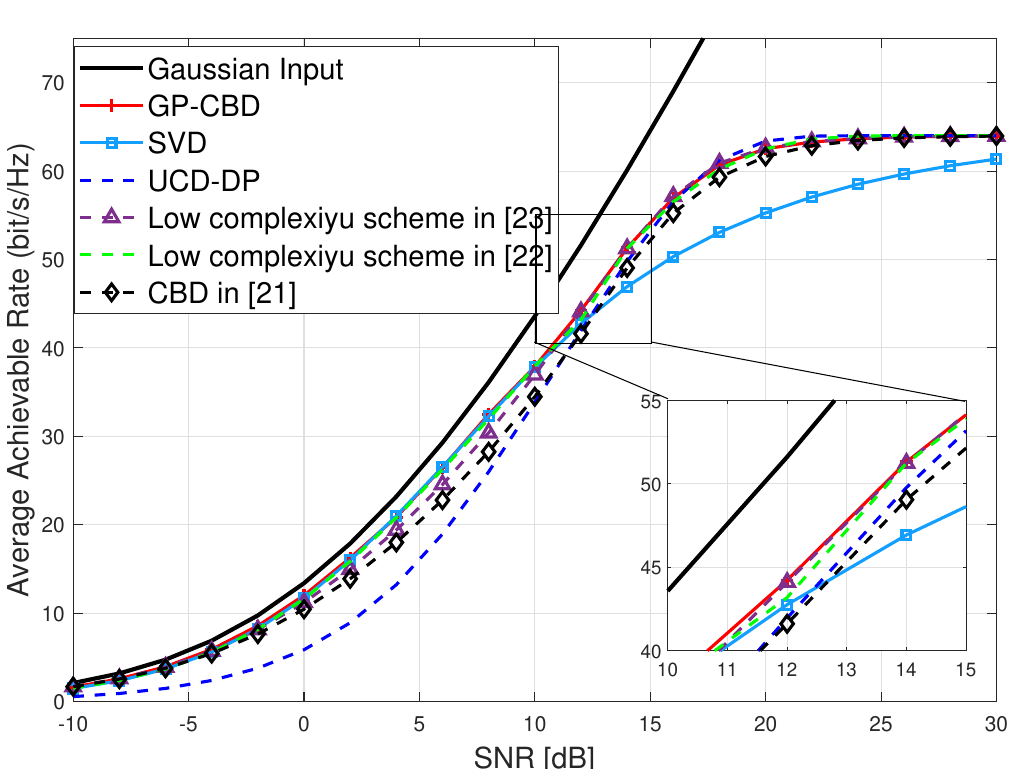}
        \subcaption{{$16\times16$ MIMO-BICM with 16-QAM transmission.}}
        \label{Fig:16×16_16QAM_MI}
    \end{subfigure}
    \begin{subfigure}[b]{0.32\textwidth} 
        \centering
\includegraphics[width=2.25in, height=1.7in]{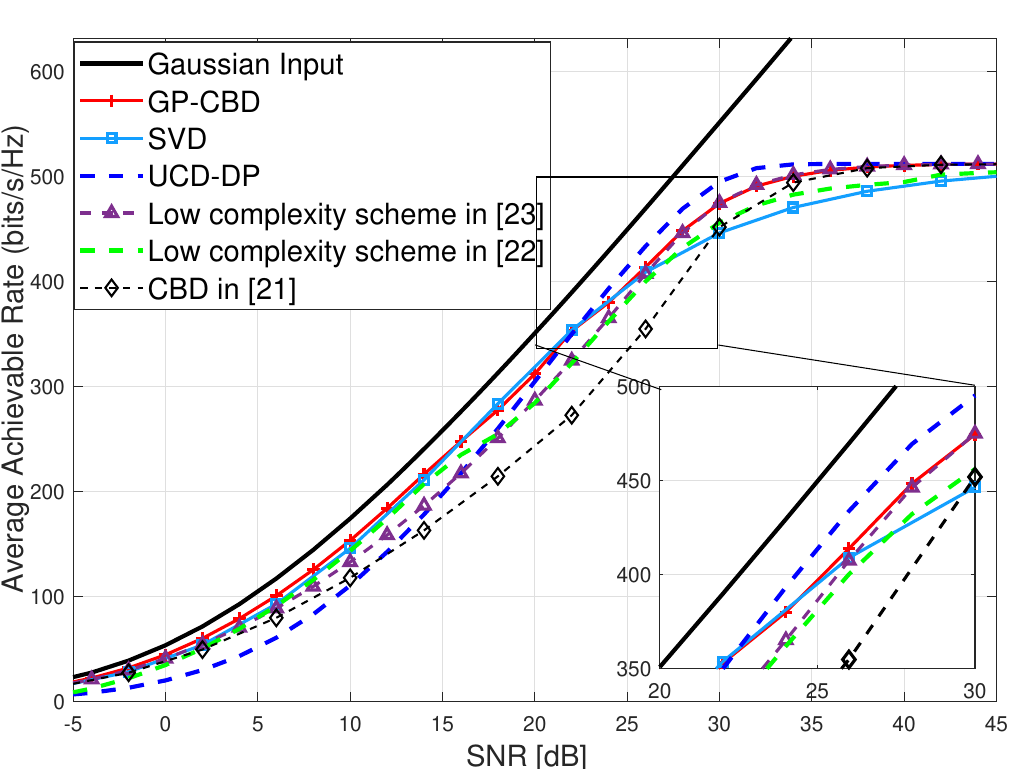}
        \subcaption{$64\times64$ MIMO-BICM with 256-QAM transmission.}
        \label{Fig:64×64_256QAM_MI}
    \end{subfigure}
\end{center}
    \caption{The achievable rate of the MIMO-BICM system under Rayleigh channels.}
    \label{Fig:1}
\end{figure*}

\begin{figure*}[!ht]
\begin{center}
    \begin{subfigure}[b]{0.32\textwidth} 
        \centering
        \includegraphics[width=2.25in, height=1.7in]{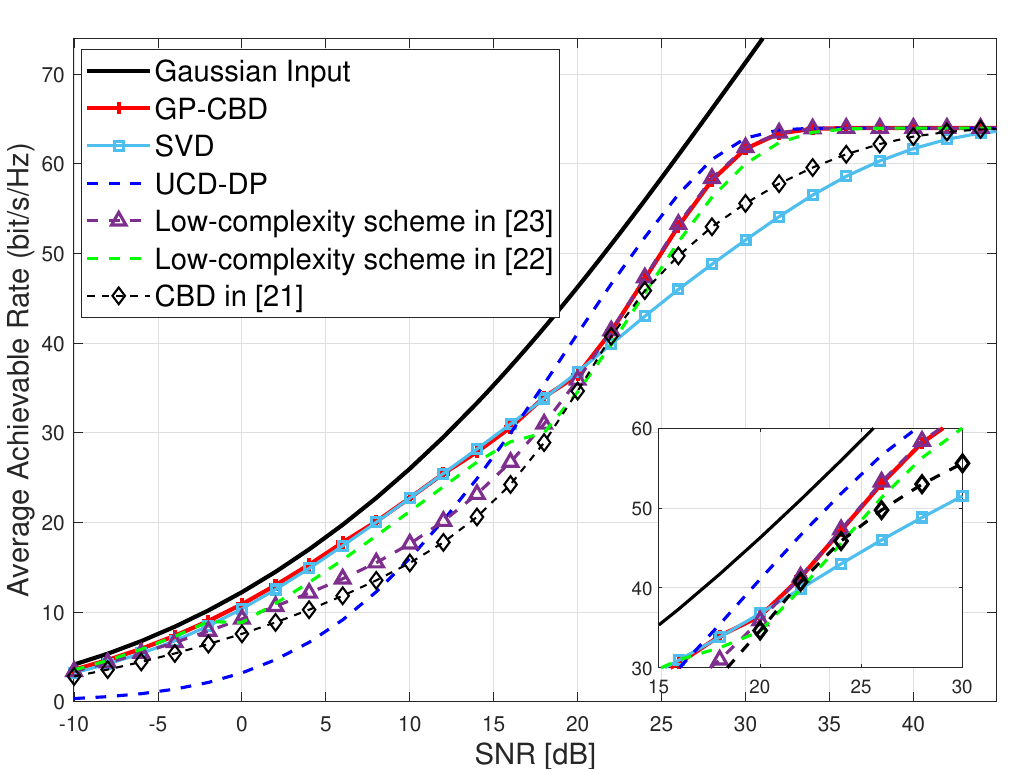}
        \subcaption{$8\times 256$ MIMO-BICM system  with 256-QAM transmission.} 
        \label{Fig:8×256_256QAM_CDL}
    \end{subfigure}
    \begin{subfigure}[b]{0.32\textwidth} 
        \centering
        \includegraphics[width=2.25in, height=1.7in]{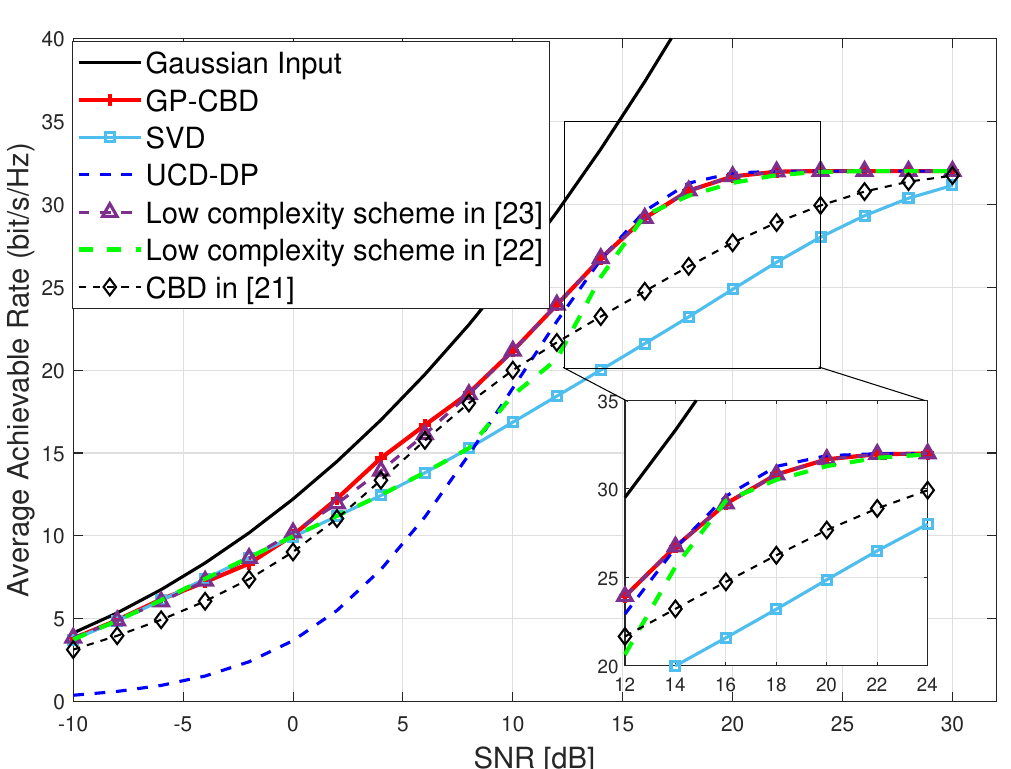}
        \subcaption{$8\times256$ MIMO-BICM system  with 16-QAM transmission.}
        \label{Fig:8×256_16QAM_MI_CDL}
    \end{subfigure}
    \begin{subfigure}[b]{0.32\textwidth} 
        \centering
\includegraphics[width=2.25in, height=1.7in]{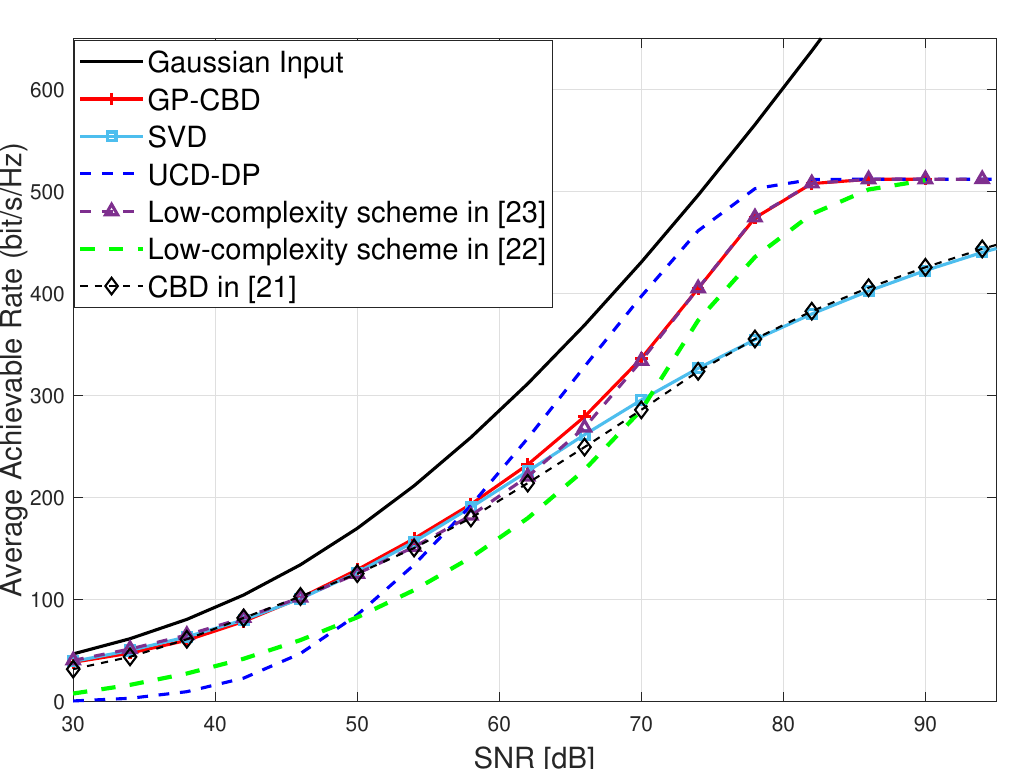}
        \subcaption{$64\times64$ MIMO-BICM system  with 256-QAM transmission.}
        \label{Fig:64×64_256QAM_CDL}
    \end{subfigure}
\end{center}
    \caption{The achievable rate of the massive MIMO-BICM system under CDL channels.}
    \label{Fig:2}
\end{figure*}

\subsection{Achievable Rate Performance}

{In this case, we characterize the system performance via ergodic achievable rate analysis, wherein GP-CBD is applied to individual channel realizations and the achievable rates are statistically averaged.} The ergodic achievable rate performance is evaluated in bits/s/Hz versus the SNR defined as $\rho \triangleq 10\log_{10}\frac{1}{\sigma^2_z}$ [dB], and the corresponding results are obtained by averaging 1000 independent trials. Figs. \ref{Fig:1} and \ref{Fig:2} exhibit the achievable rate under the Rayleigh and CDL channels, respectively.
For the CDL channel in Figs. \ref{Fig:2}(a) and  \ref{Fig:2}(b), the transmitter and receiver are equipped with a $16\times8$ and a $2\times2$ uniform plane array with vertical polarization, respectively. The antenna array in Fig. \ref{Fig:2}(c) is $8\times4$ uniform plane array with vertical polarization at both the transmitter and receiver. In the two cases, the \textit{log-max} SD has an unaffordable complexity; hence, we omit the performance of the \textit{log-max} SD here. 

In Figs. \ref{Fig:1} and \ref{Fig:2}, a visible fact is that the channel capacity under Gaussian inputs can grow unbounded as the SNR increases. With finite-alphabet inputs e.g., 256-QAM and 16-QAM, the achievable rate saturates to the peak value in high SNRs, and a reduced data rate (resulted from lower channel-coding rate) is supported to ensure reliable transmission in the low and medium SNR regimes. According to the previous analysis in Section \ref{SEC:III}, the rate performance of the SVD-MMSE is optimal with a low SNR, but is limited by some poorer eigen-subchannels when SNR is higher. In contrast, the rate performance of the UCD scheme is optimal in the high SNR regime, but it degrades at low SNRs. This is because the UCD decomposes the MIMO channel into multiple identical parallel eigen-subchannels with ECCN being $1$. 

In Fig. \ref{Fig:1}, the proposed GP-CBD scheme for the Rayleigh channel exhibits the  performance identical to the SVD-MMSE with low SNR. While in the high SNR scenario, GP-CBD significantly outperforms the SVD-MMSE and these low-complexity schemes in \cite{CBDbib,maleki2024precoding} (more than $5$ dB), and its performance closely approaches that of the UCD. Fig. \ref{Fig:2} shows the rate performance under the highly correlated CDL channel. In this case, our proposed GP-CBD has a more significant gain (exceeds than $10$ dB) compared to the SVD-based transceiver. Note that while our GP-CBD exhibits a slight performance degradation compared to the optimal UCD scheme in the high-SNR regime, this minor loss is compensated by its advantage of low complexity.

The impact of the varying number of antennas and modulation orders is also evaluated. With a fixed modulation order, Figs. \ref{Fig:1}(a) and (c) show a similar performance comparison among different algorithms with the varying antennas. While with a fixed number of antennas, Figs. \ref{Fig:2}(a) and (b) show that the higher modulation order transmission e.g. 256-QAM, needs higher SNR to saturate to the peak achievable rate than 16-QAM in the $8\times 256$ MIMO-BICM system. Specifically, for the 256-QAM and 16-QAM transmission, the achievable rate saturates at $64$ bits/s/Hz and $32$ bits/s/Hz at $30$ dB and $20$ dB, respectively. We also observe that there exists some crossover in Figs. (\ref{Fig:1}) and (\ref{Fig:2}). This phenomenon is consistent with the analysis conducted in Section \ref{SEC:III}. As part of our future work, we aim to develop a generalized transceiver design that achieves optimal achievable rate performance across the entire SNR range.

We also evaluate different power allocation strategies, as illustrated in Fig. \ref{Fig:waterfilling}, including classic water-filling (WF), mercury water-filling (MWF) \cite{1650354}, and average power allocation. The results show that MWF effectively improves the achievable rate under finite-alphabet inputs, while classic WF enhances performance in the low-SNR regime. Our proposed scheme, when combined with power allocation, achieves a superior achievable rate by improving performance at low SNRs and maintaining robustness at high SNRs for different power allocation schemes.


\subsection{BER Performance}
In this subsection, we evaluate the BER performance of different transceivers. We adopt an LDPC code \cite{3GPP} as the channel coding scheme, and perform transmissions in accordance with the modulation and coding scheme (MCS) table described in \cite[Table 5.1.3.1-2: MCS index table 2 for PDSCH]{38214}. In the simulations, we set that each frame consists of $4992\times Q_m$ bits, { and is transmitted in quasi-static block fading channels. These channels are i.i.d over the time.}



 
\begin{figure}[tb]
\centering
\includegraphics[width=3in]{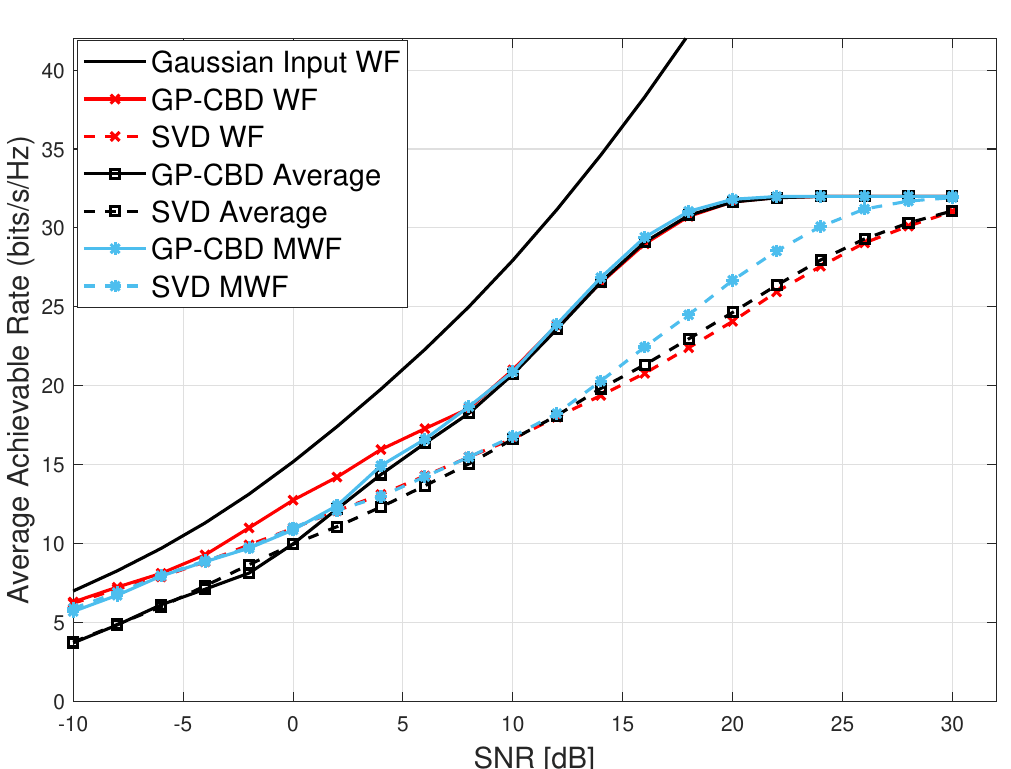}
\caption{The achievable rate performance of different power allocation methods for 16-QAM transmission under $8\times 256$ CDL channels.}
\label{Fig:waterfilling}
\end{figure}

\begin{figure}[tb]
\centering
\includegraphics[width=3.4in]{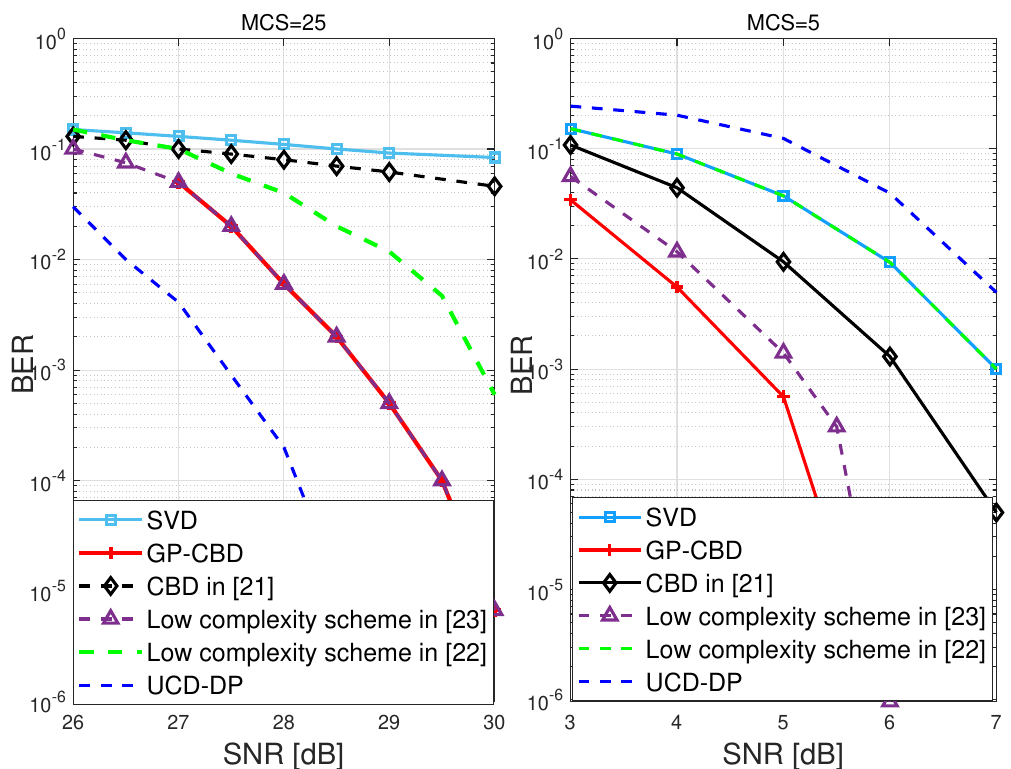}
\caption{{BER performance of the $8\times256$ coded MIMO-BICM systems with MCS25 and MCS5 transmissions under the CDL channel.}}
\label{Fig:mcs25_mcs5_8×256}
\end{figure}

\begin{figure}[htb]
\centering
\includegraphics[width=3.4in]{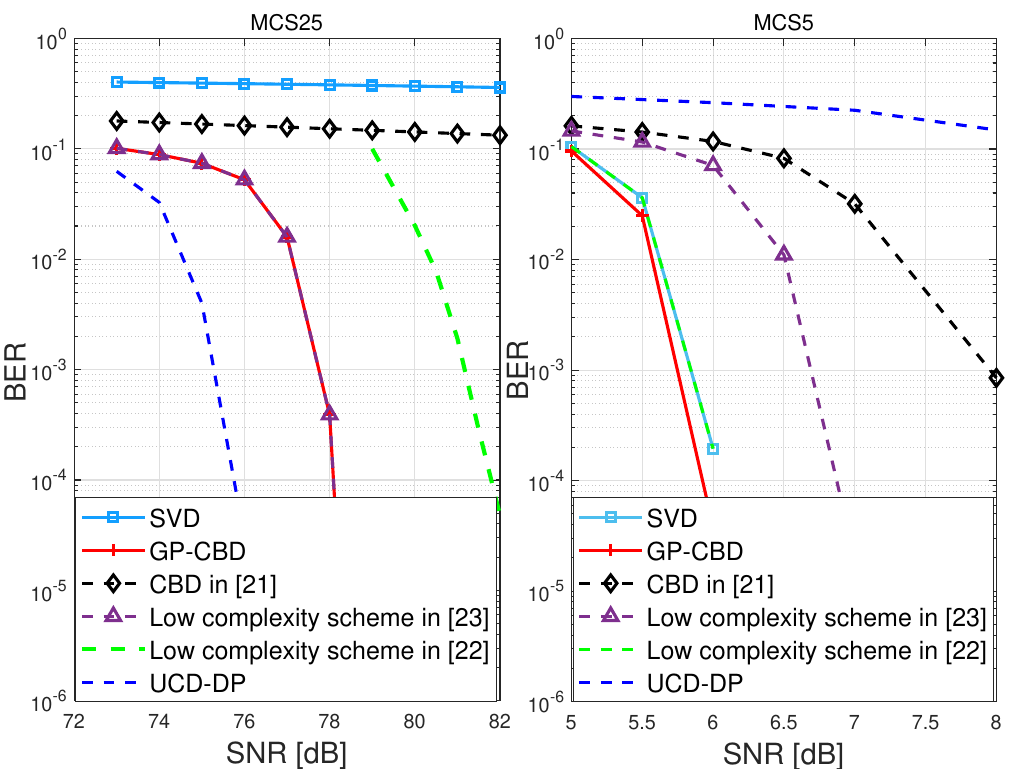}
\caption{{BER performance for a $64\times64$ coded MIMO-BICM system with MCS25 and MCS5 transmissions under the CDL and Rayleigh channel, respectively.}}
\label{Fig:mcs25_mcs5_64×64}
\end{figure}

Fig. \ref{Fig:mcs25_mcs5_8×256} shows the BER performance comparison of the $8 \times 256$ coded MIMO-BICM system using 16-QAM (MCS5 with spectral efficiency (SE) of $11.82$ bits/s/Hz) and 256-QAM (MCS25 with SE of $55.3$ bits/s/Hz) transmissions under the CDL channel. In the case of MCS5, our GP-CBD clearly outperforms other algorithms, which is consistent with the results in Fig. \ref{Fig:2}(b). Specifically, at the BER level of $10^{-4}$, GP-CBD achieves a gain of approximately $0.35$ dB, $2$ dB, and $1.5$ dB over the low-complexity schemes in \cite{PL-CBD, maleki2024precoding}, and the CBD, respectively. For MCS25, the UCD-DP scheme achieves superior performance among these transmission strategies. In this case, the BER performance of the proposed GP-CBD design matches that of the scheme in \cite{PL-CBD}, and outperforms SVD-MMSE, CBD, and the low-complexity scheme in \cite{maleki2024precoding}. Note that the GP-CBD suffers about $1.2$ dB performance loss compared to the UCD-DP scheme.

Fig. \ref{Fig:mcs25_mcs5_64×64} shows the BER performance comparison for a $64 \times 64$ coded MIMO-BICM system with MCS25 (SE = $442.5$ bits/s/Hz) under the CDL channel and MCS5 (SE = $94.5$ bits/s/Hz, corresponding to $23.63$ bits/s/Hz in Fig. \ref{Fig:1}(b)) transmission under a Rayleigh fading channel, respectively. In the case of MCS25, it is clear that the BER performance of SVD-MMSE and CBD dramatically degrades due to the large ECCN. While the proposed GP-CBD can alleviate this degradation. In the case of MCS5, our proposed GP-CBD is reduced to the SVD, and has the superior BER and rate performance here. Notice that the SVD-MMSE has a slight loss due to the MMSE approximation operation for obtaining the LLR. In additional, the trend of numerical BER results are consistent with the achievable rate performance in different channel models. All simulation results show that the BER performance of our GP-CBD outperforms both the SVD-MMSE and the original CBD schemes, with improvements being more notable in high-SE scenarios. Moreover, the GP-CBD achieves BER performance comparable to the UCD-DP scheme in high-code-rate scenarios, while maintaining significantly lower complexity, especially in massive MIMO systems. This well justifies that our GP-CBD scheme could be a competitive transceiver design solution for future massive MIMO systems.




 

\section{Conclusion} \label{SEC:conclusion}
In this paper, we presented a theoretical analysis of the achievable rate for CBD transceivers in MIMO-BICM systems. We demonstrated that leveraging MI to evaluate achievable rate performance is both simple and efficient. Based on the derived MI analysis, we identified the desired CBD structure to maximize the achievable rate. Specifically, in low SNR regimes (corresponding to low channel coding rates), the optimal CBD reduces to SVD-based designs. In contrast, in high SNR regimes (corresponding to high channel coding rates), the optimal transceiver designs require a CBD structure with equal diagonal entries. Building on these insights, we proposed a GP-CBD scheme that supports low-complexity parallel implementation. Simulation results show that GP-CBD outperforms the original CBD design proposed in \cite{CBDbib} in terms of both achievable rate and BER performance. Moreover, GP-CBD demonstrates significant advantages over other existing state-of-the-art schemes, highlighting its potential for application in next-generation communications that demand high spectral efficiency and low latency.  Our analysis suggested that integrating the advantages of both SVD and UCD transceiver designs could form a generalized optimal channel decomposition that maximizes the achievable rate across the entire SNR range. This presents an intriguing direction for our future research.

\appendices
\section{Proof of Theorem 1}\label{App:A}
Using the Jensen inequality, we have
    \begin{equation}\label{eq:App41}
    \begin{aligned}
         \rm R_{ MIMO-BICM} & =  \sum^{K}_{k=1}\left( 1-\mathbb E_{c_k, \mathbf{z}} \left\{\log_2 \left(1+e^{(1-2c_k)L_k} \right) \right\} \right) \\
         & \geq \sum^{K}_{k=1}\left( 1- \log_2 \left(1+\mathbb E_{c_k, \mathbf{z}}\{e^{(1-2c_k)L_k}\} \right)  \right).
    \end{aligned}
    \end{equation}
Based on (\ref{eq:eqprobe}), we have 
\begin{equation}\label{eq:app1}
    \mathbb E_{c_k, \mathbf{z}}\{e^{(1-2c_k)L_k}\} = \frac{1}{2}\mathbb E_{\mathbf{z}}\{e^{-L_k(c_k=1)}\}+\frac{1}{2}\mathbb E_{\mathbf{z}}\{e^{L_k(c_k=0)}\},
\end{equation}
and
\begin{equation}
    \begin{aligned}
        &L_k(c_k=1) = \frac{1}{\sigma^2_z}\left( \mathop{\min}_{\mathbf{s}^1\in \mathcal{X}^{1},\mathbf{s}^0\in \mathcal{X}^{0}}\|\mathbf{H}(\mathbf{s}^1-\mathbf{s}^0)+\mathbf{z}\|^2- \|\mathbf{z}\|^2\right) \\
        &= \min \left( \frac{1}{\sigma^2_z} \left( \|\mathbf{H}(\mathbf{s}^1-\mathbf{s}^0)\|^2+2\Re\{\mathbf{z}^H\mathbf{H}(\mathbf{s}^1-\mathbf{s}^0)\}\right)\right),
        \\
        &L_k(c_k=0) = \frac{1}{\sigma^2_z}\left( \|\mathbf{z}\|^2- \mathop{\min}_{\mathbf{s}^1\in \mathcal{X}^{1},\mathbf{s}^0\in \mathcal{X}^{0}}\|\mathbf{H}(\mathbf{s}^0-\mathbf{s}^1)+\mathbf{z}\|^2\right)\\
        &=\min \left(\frac{1}{\sigma^2_z}\left( -\|\mathbf{H}(\mathbf{s}^0-\mathbf{s}^1)\|^2+2\Re\{\mathbf{z}^H\mathbf{H}(\mathbf{s}^1-\mathbf{s}^0)\}\right)\right).
    \end{aligned}
\end{equation}
For notational simplicity, we define $\mathbf{e} = \mathbf{s}^1-\mathbf{s}^0$. In high SNR regimes, with $\sigma_z^2 \rightarrow 0$, we easily obtain $L_k(c_k=1) \approx \min( \frac{\|\mathbf{He}\|^2}{\sigma^2_z})$ and $L_k(c_k=0)\approx \min(-\frac{\|\mathbf{He}\|^2}{\sigma^2_z} )$. Substituting them into (\ref{eq:app1}), we have 
\begin{equation}
    \mathbb E_{c_k, \mathbf{z}}\{e^{(1-2c_k)L_k}\} = e^{\frac{-\min \|\mathbf{He}\|^2}{\sigma_z^2}}.
\end{equation}
 
In low SNR regimes, with $\sigma_z^2 \rightarrow \infty$, it is clear that the signal is drowned in noise. Then, we have
\begin{equation}
\begin{aligned}
    \mathbb E_{c_k, \mathbf{z}}\{e^{(1-2c_k)L_k}\}  &= \frac{1}{2(\pi\sigma_z)^{N_r}}\int \left\{e^{\frac{-\min (\|\mathbf{He+z}\|^2)+\mathbf{z}^2}{\sigma_z^2}}e^{\frac{-\|\mathbf{z}\|^2}{\sigma_z^2}} \right. \\ &+ \left. e^{\frac{-\min (\|\mathbf{-He+z}\|^2)+\mathbf{z}^2}{\sigma_z^2}}e^{\frac{-\|\mathbf{z}\|^2}{\sigma_z^2}} \right\} d\mathbf{z} \\
     \approx \frac{1}{(\pi\sigma_z)^{N_r}}&\int e^{\frac{-\min((\mathbf{He})^H\mathbf{He}+\|\mathbf{z}\|^2)}{\sigma^2_z}}d\mathbf{z} = e^{\frac{-\min((\mathbf{He})^H\mathbf{He})}{\sigma^2_z}}.
\end{aligned}
\end{equation}
Moreover, $\mathbf{e}^H\mathbf{e}\geq e_i^He_i\geq d^2_{\rm min}$ with $e_i \neq 0$ and $d^2_{\rm min}$ being the MED of the constellation set $\mathcal{M}$. Hence, we have 
\begin{equation}
\min((\mathbf{He})^H\mathbf{He})\geq d^2_{\rm min}\mathbf{h}_i^H\mathbf{h}_i,
\end{equation}
where $\mathbf{h}_i$ is the $i$-th column of $\mathbf{H}$. Substitute it into (\ref{eq:App41}). Then we obtain
\begin{equation}
    {\rm R_{MIMO-BCIM}} \geq \sum^{N_s}_{i=1}\sum^{Q_m}_{k=1}\left( 1- \log_2 \left(1+e^{-\frac{d^2_{\rm min}\mathbf{h}_i^H\mathbf{h}_i}{\sigma_z^2}} \right)  \right),
\end{equation}
which concludes the proof of Theorem 1.

\bibliographystyle{ieeetr}
\bibliography{all}
\end{document}